\numberwithin{equation}{section}
\numberwithin{figure}{section}
\theoremstyle{plain}
\newtheorem{thm}{\protect\theoremname}
\theoremstyle{plain}
\newtheorem{lem}[thm]{\protect\lemmaname}
\theoremstyle{plain}
\newtheorem{prop}[thm]{\protect\propositionname}
\providecommand{\lemmaname}{Lemma}
\providecommand{\propositionname}{Proposition}
\providecommand{\theoremname}{Theorem}
\begin{document}
\title{Pekar's Ansatz and the Ground-State Symmetry of a Bound Polaron}
\address{\noindent \hspace{-4ex}Georgia Institute of Technology, School of
Mathematics, Atlanta, Georgia 30332}
\address{\noindent \hspace{-4ex}rghanta3@math.gatech.edu}
\thanks{\noindent \hspace{-3.5ex}The paper is based on work supported by
the United States National Science Foundation Grant DMS-1600560. }
\author{Rohan Ghanta}
\begin{abstract}
\noindent We consider a Fröhlich polaron bound in a symmetric Mexican
hat-type potential. The ground state is unique and therefore invariant
under rotations. However, we show that the minimizers of the corresponding
Pekar problem are nonradial. Assuming these nonradial minimizers are
unique up to rotation, we prove in the strong-coupling limit that
the ground-state electron density converges in a weak sense to a rotational
average of the densities of the minimizers.
\end{abstract}

\maketitle

\section{Introduction}

In order to develop a theory of dielectric breakdown in semiconductors,
H. Fröhlich proposed a model in 1937 of an electron interacting with
the quantized optical modes (phonons) of an ionic crystal. Known today
as the (Fröhlich) polaron, it is one of the simplest examples of a
particle interacting with a quantized field, and perhaps most notably,
it has served as a testing ground for Feynman's path integral formulation
of quantum field theory. It is described by the Hamiltonian

\begin{equation}
H_{\alpha}^{V}=\mathbf{p}^{2}-\alpha^{2}V(\alpha x)+\int_{\mathbb{R}^{3}}a_{k}^{\dagger}a_{k}dk-\frac{\sqrt{\alpha}}{\left(2\pi\right)^{3/2}}\int_{\mathbb{R}^{3}}\left[a_{k}e^{ik\cdot x}+a_{k}^{\dagger}e^{-ik\cdot x}\right]\frac{dk}{|k|},\label{eq:1-3}
\end{equation}
acting on the Hilbert space $\mathcal{H}:=L^{2}(\mathbb{R}^{3})\otimes\mathcal{F}$
where $\mathcal{F}:=\oplus_{n\geq0}\otimes_{s}^{n}L^{2}(\mathbb{R}^{3})$
is the (symmetric) phonon Fock space. An outstanding idiosyncrasy
of the Fröhlich's polaron is that for all its popularity over the
years as a convenient ``toy model'' for a singularity-free field
theory, the electron-phonon interaction term in the Hamiltonian makes
it intractable for calculating even the most basic quantities such
as the effective mass and the ground-state energy. This computational
difficulty has led S.I. Pekar (in a series of collaborations with
L.D. Landau, O.F. Tomasevich and others between 1944 and 1950) to
derive from Fröhlich's model a much simpler\textendash{} albeit nonlinear\textendash{}
effective theory, built entirely on an (unjustified!) Ansatz for the
ground-state wave function. Remarkably, Pekar's effective minimization
problem nevertheless yields to leading order the exact ground-state
energy of the polaron in the strong-coupling limit $\alpha\rightarrow\infty$.
It is therefore natural to conjecture that the ground-state electron
density also converges (in a weak sense) to a minimizer of Pekar's
effective problem: after all, this is known to be the case for particular
one-dimensional models. In this paper, however, using an intuitive
example of a polaron localized in a radial potential, we shall showcase
a discrepancy in spherical symmetry between a rotation-invariant Hamiltonian
and its unique ground state on the one hand and the corresponding
Pekar Ansatz for the wave function on the other. This in turn illustrates
that such expected (weak) convergence of the ground state to Pekar's
minimizer is not in general true. 

We denote by $x\in\mathbb{R}^{3}$ the electron coordinate and by
$k\in\mathbb{R}^{3}$ the phonon mode; $\mathbf{p}=-i\nabla_{x}$
is the electron momentum, and the electric potential $V\in L^{3/2}(\mathbb{R}^{3})+L^{\infty}(\mathbb{R}^{3})$
is nonnegative and vanishes at infinity and usually arises from an
impurity in the crystal; $a_{k}^{\dagger}$ and $a_{k}$ are scalar
creation and annihlation operators on $\mathcal{F}$ which satisfy
the canonical commutation relation $[a_{k},a_{k'}^{\dagger}]=\delta(k-k')$;
and $\alpha>0$ is the electron-phonon coupling parameter. The \textit{ground-state
energy} of the model is defined to be
\begin{equation}
E^{V}\left(\alpha\right):=\inf\left\{ \left(\Psi,\ H_{\alpha}^{V}\Psi\right)_{_{\mathcal{H}}}\mid\left\Vert \Psi\right\Vert _{_{\mathcal{H}}}=1\right\} .\label{eq:1-2}
\end{equation}
Any normalized vector $\Omega\in\mathcal{H}$ that achieves the infimum
in (\ref{eq:1-2}) is called a \textit{ground-state wave function},
and it satisfies the Schrödinger equation $H_{\alpha}^{V}\Omega=E^{V}\left(\alpha\right)\Omega$;
integrating out its phonon coordinates, one has the \textit{electron
density }$\|\Omega\|_{\mathcal{F}}^{2}\left(x\right)$. Most of the
literature is concerned with the \textit{translation-invariant (TI-)
polaron}\textendash{} i.e., the case where $V\equiv0$ in (\ref{eq:1-3}).
It was shown in the 1980s that for all values of the coupling parameter
$\alpha>0$, the TI-polaron does not have a ground state (finally
settling a decades-long debate on the existence of a delocalization-localization
transition). We are instead interested in the case of nonzero $V$,
the \textit{bound polaron}, which has attracted sizable attention
(see {[}Dv1996{]} and the references therein; in particular, we refer
to the rigorous work on pinning transitions by H. Spohn {[}Sp1986{]}
and H. Löwen {[}Lw1988a{]}, {[}Lw1988b{]}). In contrast to the TI-polaron,
under physically natural conditions on hte external potential $V$,
the Fröhlich Hamiltonian $H_{\alpha}^{V}$ has a unique ground state
for all $\alpha>0$. This follows from now-standard techniques developed
by F. Hiroshima {[}Ha2000{]} and by M. Griesemer, E.H. Lieb and M.
Loss {[}GLL2001{]} to study the analogous Pauli-Fierz model in quantum
electrodynamics (see Appendix). Note that we have added the potential
in the scaled form $\alpha^{2}V(\alpha x)$ to the Hamiltonian in
order for its effect to survive in the limit $\alpha\rightarrow\infty$
(see Theorem 3.2 in {[}GW2013{]}). We work with the potential 

\begin{equation}
V_{R}\in C_{c}^{\infty}\left(\mathbb{R}^{3}\right),\ \ 0\leq V_{R}\leq1\ \ \text{and\ \ }V_{R}(x)=\left\{ \begin{array}{ccc}
0 & \text{when} & \ \ \ \ \,\,\left|x\right|\leq1\\
1 & \text{when} & 2\leq\left|x\right|\leq R\\
0 & \text{when} & \ \ \ \ \ \ \ \ \ \ \ \left|x\right|\geq R+1
\end{array}\right..\label{eq:P.1}
\end{equation}

First we motivate our results with a general potential. When the coupling
parameter $\alpha$ is large, Pekar guessed that the ground state
has the product form

\begin{equation}
\Psi_{\alpha}=\psi_{\alpha}(x)\otimes\Phi_{\alpha},\label{eq:1-1}
\end{equation}
where $\psi_{\alpha}\in L^{2}(\mathbb{R}^{3})$ is an electronic wave
function, and $\Phi_{\alpha}\in\mathcal{F}$ is a coherent state depending
only on the phonon coordinates: 
\begin{equation}
\Phi_{\alpha}=\prod_{k}\exp\left(z_{\alpha}(k)a_{k}^{\dagger}-\overline{z_{\alpha}(k)}a_{k}\right)\left|0\right\rangle \label{eq:a1}
\end{equation}
with the vacuum $\left|0\right\rangle \in\mathcal{F}$ and the phonon
displacements $z_{\alpha}(k)\in L^{2}(\mathbb{R}^{3})$, which are
to be determined variationally. In particular, $a_{k}\Phi_{\alpha}=z(k)\Phi_{\alpha}$. 

The optimization problem in (\ref{eq:1-2}) for the ground-state energy
becomes considerably more tractable if we assume that the ground state
has the product form in Pekar's Ansatz. Minimizing the quantity $\langle\Psi,\ H_{\alpha}^{V}\Psi\rangle$
over the more restrictive set of product wave functions in (\ref{eq:1-1})
and completing the square, Pekar deduced that
\begin{equation}
z_{\alpha}(k)=\frac{1}{\pi|k|}\sqrt{\frac{\alpha}{2}}\int_{\mathbb{R}^{3}}e^{-ik\cdot x}|\psi_{\alpha}|^{2}dx,\label{eq:a2}
\end{equation}
which in turn couples the coherent state to the electronic wave function
in (\ref{eq:1-1}), and arrived at an \textit{upper bound} for the
ground-state energy: 
\[
E_{\alpha}^{V}\leq\inf\left\{ \langle\Psi,\ H_{\alpha}^{V}\Psi\rangle\mid\|\Psi\|=1\ \mbox{and}\ \Psi=\psi\otimes\Phi\right\} 
\]
\begin{equation}
=\alpha^{2}e(V).\label{1-5}
\end{equation}
The quantity $e(V)$ in (\ref{1-5}) can be calculated by minimizing
the nonlinear \textit{Pekar functional}: 
\begin{equation}
e(V)=\inf_{\|\psi\|_{2}=1}\mathcal{E}_{V}(\psi),\label{eq:1-4.5}
\end{equation}
where 
\begin{equation}
\mathcal{E}_{V}(\psi)=\int_{\mathbb{R}^{3}}|\nabla\psi|^{2}dx-\int\int_{\mathbb{R}^{3}\times\mathbb{R}^{3}}\frac{\left|\psi(x)\right|^{2}\left|\psi(y)\right|^{2}}{\left|x-y\right|}dx\,dy-\int_{\mathbb{R}^{3}}V(x)|\psi(x)|^{2}dx.\label{eq:1-4}
\end{equation}
Furthermore, if the minimization problem problem in (\ref{eq:1-4.5})
admits a minimizer $\phi(x)$, then $\alpha^{3/2}\phi(\alpha x)$
is the electronic wave function in Pekar's product ground state from
(\ref{eq:1-1}): 

\begin{equation}
\Psi_{\alpha}=\alpha^{3/2}\phi(\alpha x)\prod_{k}\exp\left(z_{\alpha}(k)a_{k}^{\dagger}-\overline{z_{\alpha}(k)}a_{k}\right)\left|0\right\rangle ,\label{eq:1-4.25}
\end{equation}
where
\[
z_{\alpha}(k)=\frac{1}{\pi|k|}\sqrt{\frac{\alpha}{2}}\int_{\mathbb{R}^{3}}e^{-ik\cdot x}|\alpha^{3/2}\phi(\alpha x)|^{2}dx;
\]
note that the electronic function becomes more localized as the coupling
paramter $\alpha>0$ increases. 

Though Pekar's result in (\ref{1-5}) is only an upper bound, his
Ansatz provides the convenience of eliminating all of the phonon coordinates
from the calculation: the functional in (\ref{eq:1-4}) needs to be
minimized just over a single electronic coordinate, a sharp contrast
to the more demanding situation in (\ref{eq:1-2}). 

Not being amenable to the direct method in the calculus of variations,
Pekar's minimization problem for approximating the ground-state energy
in turn motivated mathematicians to develop novel and far-reaching
techniques in nonlinear analysis such as the symmetrization arguments
of E.H. Lieb, the Concentration-Compactness Lemma of P.L. Lions and
the stability theory of T. Cazenave and P.L. Lions. Indeed, the first
detailed analysis of the nonlinear problem in (\ref{eq:1-4}) was
given in 1977 by Lieb, who used rearrangement inequalities to show
that a minimizer exists when $V=0$. He also established that this
minimizer is unique up to a translation by proving uniqueness of a
radial solution for the corresponding Euler-Lagrange equation
\[
\left\{ -\triangle-2\int_{\mathbb{R}^{3}}|\phi(y)|^{2}|x-y|^{-1}dy\right\} \phi(x)=\phi(x),
\]
known in the literature as the \textit{Choquard-Pekar} or \textit{Schrödinger-Newton}
\textit{equation}. For showing the existence of a minimizer when $V\neq0$
in (\ref{eq:1-4}), Lieb's symmetrization argument applies for a symmetric
decreasing potential. This motivated Lions to develop his famous Concentration
Compactness Principle from 1984: for a general $V\geq0$ that vanishes
at infinity, he showed that the problem in (\ref{eq:1-4}) admits
a minimizer. Uniqueness of a minimizer when $V\neq0$, however, remains
an elusive open problem.

Despite giving rise to a rich variational theory that continues to
be a source of interesting mathematical problems, Pekar's Produkt-Ansatz
of the ground state in (\ref{eq:1-1}) lacks a rigorous justification:
It is based entirely on his \textit{feeling} that (we quote the amusing
yet accurate, anthropomorphic description from {[}LT1997{]}) ``...at
large coupling the phonons cannot follow the rapidly moving electron
(as they do at weak coupling) and so resign themselves to interacting
with the ``mean'' electron density $\psi^{2}(\mathbf{x})$.'' This
``mean-field'' interaction is reflected in the phonon displacements,
given in equation (\ref{eq:a2}), for Pekar's coherent state.) It
is therefore remarkable that Pekar's crude upper bound for the ground-state
energy in (\ref{1-5})\textendash{} derived after all from his unjustified
Ansatz\textendash{} becomes exact (to the leading order) in the strong-coupling
limit: 

\begin{equation}
\lim_{\alpha\rightarrow\infty}\frac{E_{\alpha}^{V}}{\alpha^{2}}=e(V).\label{eq:1-6}
\end{equation}
The convergence in (\ref{eq:1-6}) was first argued by M.D. Donsker
and S.R.S. Varadhan in {[}DV1983{]} using large deviation theory.
In 1997, Lieb and L.E. Thomas gave an alternate, pedestrian proof
of the convergence in (\ref{eq:1-6}) using simple modifications of
the Hamiltonian ({[}LT1997{]}), a philosophy that can be traced back
to the inspiring work of Lieb and K. Yamazaki ({[}LY1958{]}). 

In light of the convergence in (\ref{eq:1-6}) for the ground state
energy, it is now only natural to investigate how well Pekar's theory
describes the ground-state wave function (in the strong-coupling limit).
Using the now-standard techniques developed by F. Hiroshima {[}Ha2000{]}
and by M. Griesemer, E.H. Lieb and M. Loss {[}GLL2001{]} to study
the analogous Pauli-Fierz model in quantum electrodynamics, it can
be argued that, under physically natural conditions on the external
potential, the Fröhlich Hamiltonian $H_{\alpha}^{V}$ has a unique
ground state for all values of the coupling paramater $\alpha>0$.
Because it is straightforward to adapt the arguments in {[}Ha2000{]}
and {[}GLL2001{]} to the Fröhlich Hamiltonian and because the arguments
are rather long, we do not provide a proof of the existence and uniqueness
of a ground state here; a sketch of the main ideas is given in the
Appendix. 

Let $\|\Psi_{\alpha}^{V}\|_{\mathcal{F}}^{2}(x)$ denote the electron
density of the ground state, and recall that a minimizer of the Pekar
functional from (\ref{eq:1-4}) is the electronic wave function in
his Produkt-Ansatz. Since the ground state energy in the strong-coupling
limit can be obtained (to a leading order in the electron-phonon coupling)
by minimizing the Pekar functional, shouldn't the electron density
$\|\Psi_{\alpha}^{V}\|_{\mathcal{F}}^{2}$ also converge to a minimizer
of the Pekar functional? Indeed, if the minimization problem in (\ref{eq:1-4})
for the Pekar energy admits a unique minimizer $u_{V}$, then for
all $W\in C_{c}^{\infty}\left(\mathbb{R}^{3}\right)$

\begin{equation}
\lim_{\alpha\rightarrow\infty}\frac{1}{\alpha^{3}}\int_{\mathbb{R}^{3}}\|\Psi_{\alpha}^{V}\|_{\mathcal{F}}^{2}\left(\frac{x}{\alpha}\right)W(x)\,dx=\int_{\mathbb{R}^{3}}\left|u_{V}(x)\right|^{2}W(x)\,dx\label{eq:2.1-1}
\end{equation}
This follows from a technique developed by Lieb and Simon in 1977
(for studying the Thomas-Fermi problem), and consists of differentiating
the (concave) map $\delta\mapsto e(V+\delta W)$ at $\delta=0$, where
\[
e(V+\delta W)=\inf_{\|\psi\|_{2}=1}\left\{ \mathcal{E}_{V}(\psi)-\delta\int_{\mathbb{R}^{3}}\left|\psi(x)\right|^{2}W(x)\,dx\right\} .
\]
However, it is not necessarily the case that the Pekar minimization
problem admits a unique minimizer (See Theorem 1 below). The contribution
of this paper is to address the discrepancy between a unique ground
state and the non-unique Pekar minimizers. 

Let the potential $V_{R}$ be as above. For each $\alpha>0$ the Hamiltonian
$H_{\alpha}^{V_{R}}$, $R>2$ has a unique ground-state wave function.
Since the potential $V_{R}(x)\geq0$ is short-range, i.e. decays exponentially
at infinity, it is known that for each $\alpha>0$ the Schrödigner
operator $\mathbf{p}^{2}-\alpha^{2}V_{R}(\alpha x)$ has a negative
energy bound state in $L^{2}\left(\mathbb{R}^{3}\right)$ (see e.g.
the introduction in {[}BV2004{]}). (To be precise: For the short-range
potential $V_{R}(x)$ it can be seen that there exists \textit{for
all} $\alpha>0$ some $\lambda_{0}>0$ such that for $\lambda>\lambda_{0}$
the operator $\mathbf{p}^{2}-\lambda\alpha^{2}V(\alpha x)$ has a
negative energy bound state in $L^{2}\left(\mathbb{R}^{3}\right)$.
But our proofs still hold true if for some $\lambda>\lambda_{0}$
the function $V_{R}(x)$ in (\ref{eq:P.1}) is replaced by $\lambda V_{R}(x)$,
so we do not inconvenience ourselves any further with this innocuous
technicality.) So, $V_{R}(x)$ satisfies the hypothesis of Proposition
7 in the Appendix. Furthermore, since $V_{R}(x)\geq0$ and $V_{R}\in L^{\infty}\left(\mathbb{R}^{3}\right)$,
the form bound in (\ref{eq:I.5}) follows trivially from Hölder's
inequality; the potential $V_{R}(x)$ also satisfies the hypothesis
of Proposition 8 in the Appendix and the semigroup generated by the
Hamiltonian is positivity improving in the Schroedinger representation.
Hence, for $R>2$ there exists a unique ground-state wave function
$\Psi_{\alpha}^{V_{R}}$, which is therefore invariant after a rotation
in \textit{both} the electron and phonon coordinates. We state this
precisely: Denoting $\hat{\mathbf{n}}$ to be a vector in $\mathbb{R}^{3}$,
the field (phonon) angular momentum relative to the origin is given
by the operator (see {[}Sp2004{]})
\[
J_{f}=\int_{\mathbb{R}^{3}}dk\left(k\times i\nabla_{k}\right)a_{k}^{\dagger}a_{k}.
\]
Let $\mathcal{R}_{\theta}\in SO(3)$ be a rotation by an angle $\theta$
about $\hat{\mathbf{n}}$. Since for any vector $\hat{\mathbf{n}}\in\mathbb{R}^{3}$
and all $\theta$, 
\begin{equation}
\Psi_{\alpha}^{V_{R}}\left(x;\,k\right)=e^{-i\theta\hat{\mathbf{n}}\cdot J_{f}}\Psi_{\alpha}^{V_{R}}\left(\mathcal{R}_{_{\theta}}x;\,k\right),\label{eq:I.6}
\end{equation}
we deduce that the electron density $\|\Psi_{\alpha}^{V_{R}}\|_{\mathcal{F}}^{2}(x)$
is radial for all $R$. 

But we show that the minimizers of the corresponding Pekar functional
are not radial, for $R$ large. Since the Pekar functional is, however,
invariant under rotations, this implies that the non-radial minimizer
is also not unique. 
\begin{thm}
For $R$ large, the Pekar problem $e(V_{R})$ admits only nonradial
minimizers. 
\end{thm}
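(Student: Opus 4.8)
Write $u_0$ for Lieb's minimizer of the free Pekar functional $\mathcal{E}_0$ (unique up to translation, strictly positive, radially symmetric and strictly decreasing about its centre), and recall that a minimizer $\phi_R$ of $e(V_R)$ exists for every $R$ by Lions' Concentration--Compactness Principle, since $V_R\ge 0$ vanishes at infinity; we may take $\phi_R\ge 0$. The plan is to show that a \emph{radial} minimizer would have to be, at one and the same time, a near-minimizer of the free Pekar problem and concentrated far from the origin, and that these two requirements are incompatible with rotational symmetry. So suppose, for contradiction, that along some sequence $R_j\to\infty$ the minimizer $\phi_{R_j}$ can be chosen radial.

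First I would pin down the asymptotics $e(V_R)\to e(0)-1$. Since $\mathcal{E}_0(\psi)\ge e(0)$ for $\|\psi\|_2=1$ and $0\le V_R\le 1$, one has $e(V_R)\ge e(0)-1$ for all $R$. For the matching upper bound, test $\mathcal{E}_{V_R}$ on the translate $u_0(\,\cdot-a)$ with $|a|=R/2$: for $R\ge 8$ the ball $B(a,R/4)$ is contained in $\{\,2\le|x|\le R\,\}$, where $V_R\equiv 1$, so $\int V_R\,|u_0(x-a)|^2\,dx\ge 1-\varepsilon_R$ with $\varepsilon_R:=\int_{|y|\ge R/4}|u_0|^2\,dy\to 0$, whence $e(V_R)\le \mathcal{E}_0(u_0)-1+\varepsilon_R=e(0)-1+\varepsilon_R$. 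The only slacks in the identity $e(V_R)=\mathcal{E}_0(\phi_R)-\int V_R|\phi_R|^2$ are $\mathcal{E}_0(\phi_R)-e(0)\ge 0$ and $1-\int V_R|\phi_R|^2\ge 0$, and the squeeze just obtained shows their sum is at most $\varepsilon_R$; hence \emph{both} $\mathcal{E}_0(\phi_{R_j})\to e(0)$ \emph{and} $\int V_{R_j}|\phi_{R_j}|^2\to 1$. Since $V_R=0$ on $\{|x|\le 1\}$ and $V_R\le 1$, the second fact forces $\int_{|x|<1}|\phi_{R_j}|^2\le 1-\int V_{R_j}|\phi_{R_j}|^2\to 0$: the mass of the minimizer runs out of the unit ball.

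The first fact says $(\phi_{R_j})$ is a minimizing sequence for the free Pekar functional. By the standard Concentration--Compactness analysis of that translation-invariant problem (vanishing and dichotomy are excluded by the strict subadditivity of $\lambda\mapsto\lambda^3 e(0)$) together with Lieb's uniqueness theorem, after passing to a subsequence there are $a_j\in\mathbb{R}^3$ with $\phi_{R_j}(\,\cdot+a_j)\to u_0$ strongly in $L^2(\mathbb{R}^3)$. Radial symmetry now rules out both behaviours of $(a_j)$. If $(a_j)$ is bounded, pass to a further subsequence with $a_j\to a_\infty$; then $\phi_{R_j}\to u_0(\,\cdot-a_\infty)$ in $L^2$, and as each $\phi_{R_j}$ is radial the limit is radial, so $a_\infty=0$ (a nontrivial translate of the strictly positive, strictly radially decreasing $u_0$ is not radial); but then $\int_{|x|<1}|\phi_{R_j}|^2\to\int_{|x|<1}|u_0|^2>0$, contradicting the previous paragraph. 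If instead $|a_j|\to\infty$ along a subsequence, fix $r_0$ with $\int_{B_{r_0}}|u_0|^2>3/4$; then $\int_{B(a_j,r_0)}|\phi_{R_j}|^2>3/4$ for $j$ large, and by rotational invariance of $|\phi_{R_j}|^2$ the same estimate holds for $B(\mathcal{R}a_j,r_0)$ for every $\mathcal{R}\in SO(3)$; choosing $\mathcal{R}$ so that $B(\mathcal{R}a_j,r_0)$ and $B(a_j,r_0)$ are disjoint (possible once $|a_j|>r_0$) gives total mass $>3/2>1$, absurd. Either way we reach a contradiction, so for $R$ large no minimizer of $e(V_R)$ is radial; since minimizers exist, every one of them is nonradial.

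The substance of the argument, and what I expect to be the main obstacle to write carefully, is the compactness input for the free problem: that every minimizing sequence of $\mathcal{E}_0$ at mass one is, up to translations, relatively compact in $L^2$ with limit a translate of $u_0$. This is classical (Lieb's 1977 uniqueness result, Lions' 1984 Concentration--Compactness Principle), but it is genuinely needed here rather than a convenience: a purely elementary lower bound for radial competitors via Newton's theorem, namely $\mathcal{E}_{V_R}(\psi)\ge\langle\psi,(-\Delta-|x|^{-1}-\mathbf{1}_{\{|x|\ge1\}})\psi\rangle$, falls just short of the target value $e(0)-1$ and so cannot by itself exclude radial minimizers. The conceptual point is that saturating the Pekar self-attraction forces the minimizer to look like one localized translated copy of $u_0$, while saturating $\int V_R|\psi|^2=1$ forces its mass out to radii of order $R$, and only a function with a single well-defined centre far from the origin can do both — which a radial function cannot.
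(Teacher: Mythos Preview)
Your proof is correct and shares the paper's overall strategy: place a translate of Lieb's free minimizer inside the well of $V_R$ to get $e(V_R)\le e(0)-1+o(1)$, deduce that a hypothetical radial minimizer must simultaneously be a minimizing sequence for the free Pekar problem and shed its mass from the unit ball, and obtain a contradiction via Lieb's uniqueness theorem. The one substantive difference is in the compactness step. The paper exploits radiality from the outset: Strauss' radial lemma together with Newton's theorem show that the Coulomb self-interaction is continuous along $H^1$-bounded \emph{radial} sequences, so the weak $H^1$ limit $\rho$ of the radial minimizers already satisfies $\mathcal{E}_0(\rho)\le e(0)$ with no recentring needed; the scaling inequality then forces $\|\rho\|_2=1$, Lieb gives $\rho=Q$, and this is incompatible with $\rho$ vanishing on $\{|x|\le 1\}$. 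You instead invoke the full translation-invariant Concentration--Compactness principle to produce centres $a_j$ and then use radiality a second time to exclude both bounded and unbounded $(a_j)$. Both routes work; the paper's is more self-contained (the radial compactness is short and explicit) and avoids the case split, while yours treats the free-problem compactness as a black box and compensates with the neat geometric argument in your Case~2, which the paper never needs.
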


\noindent We shall show Theorem 1 using a proof by contradiction.
Our arguments use in an essential way Lieb's 1977 uniqueness result
{[}Lb1977{]} for the translation-invariant problem. The discrepancy
in ground-state symmetry shows that the expected convergence in (\ref{eq:2.1-1})
of the (radial) ground-state electron density to a minimizer of the
Pekar functional is not possible. However, we have the following:
\begin{thm}
Let $R$ be large enough so that $e(V_{R})$ in (\ref{eq:1-4}) admits
only nonradial minimizers. Let $\Psi_{\alpha}^{V_{R}}\in\mathcal{H}$
be the unique ground-state wave function of the Fröhlich Hamiltonian
$H_{\alpha}^{V_{R}}$ in (\ref{eq:1-3}). If the minimization problem
in (\ref{eq:1-4}) for the Pekar energy admits a minimizer $u_{V_{R}}$
that is unique up to a rotation, then, denoting $\gamma$ to be the
Haar measure on $SO(3)$, 
\begin{equation}
\lim_{\alpha\rightarrow\infty}\frac{1}{\alpha^{3}}\int_{\mathbb{R}^{3}}\|\Psi_{\alpha}^{V}\|_{\mathcal{F}}^{2}\left(\frac{x}{\alpha}\right)W(x)\,dx=\int_{\mathbb{R}^{3}}\left[\int_{SO(3)}\left|u_{V_{R}}(\mathcal{R}x)\right|^{2}d\gamma(\mathcal{R})\right]W(x)\,dx\label{eq:rv}
\end{equation}
for all $W\in L^{3/2}(\mathbb{R}^{3})+L^{\infty}(\mathbb{R}^{3})$.
\end{thm}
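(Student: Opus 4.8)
The plan is to adapt the Lieb--Simon differentiation argument sketched around~(\ref{eq:2.1-1}), exploiting the radial symmetry of the ground-state electron density recorded in~(\ref{eq:I.6}) to circumvent the non-uniqueness of the Pekar minimizer. Write $\rho_{\alpha}(x):=\alpha^{-3}\|\Psi_{\alpha}^{V_{R}}\|_{\mathcal{F}}^{2}(x/\alpha)$ for the rescaled ground-state electron density; it is a probability density, and by~(\ref{eq:I.6}) it is radial. Given a test function $W$, let $\tilde{W}(x):=\int_{SO(3)}W(\mathcal{R}x)\,d\gamma(\mathcal{R})$ be its rotational average, which is radial. Since $\rho_{\alpha}$ is radial, a change of variables and Fubini give $\int_{\mathbb{R}^{3}}\rho_{\alpha}W\,dx=\int_{\mathbb{R}^{3}}\rho_{\alpha}\tilde{W}\,dx$, so it suffices to compute $\lim_{\alpha\to\infty}\int\rho_{\alpha}\tilde{W}\,dx$, and from here on only the radial potential $V_{R}+\delta\tilde{W}$ will appear.

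Next I would set up the concavity machinery. Both $g(\delta):=e(V_{R}+\delta\tilde{W})$ and $F_{\alpha}(\delta):=\alpha^{-2}E_{\alpha}^{V_{R}+\delta\tilde{W}}$ are concave in $\delta$, being infima of affine functions, and $F_{\alpha}\to g$ pointwise on a neighbourhood of $\delta=0$ by~(\ref{eq:1-6}) applied to $V_{R}+\delta\tilde{W}$ (for $W\in C_{c}^{\infty}$ this potential is bounded and vanishes at infinity, and the proofs of~(\ref{eq:1-6}) in [DV1983], [LT1997] do not use positivity of the potential). Inserting the \emph{unperturbed} ground state $\Psi_{\alpha}^{V_{R}}$ as a trial state into $H_{\alpha}^{V_{R}+\delta\tilde{W}}=H_{\alpha}^{V_{R}}-\delta\alpha^{2}\tilde{W}(\alpha\,\cdot\,)$ gives $F_{\alpha}(\delta)\le F_{\alpha}(0)-\delta\int\rho_{\alpha}\tilde{W}\,dx$ for all $\delta$; that is, $-\int\rho_{\alpha}\tilde{W}\,dx$ is a supergradient of $F_{\alpha}$ at $0$. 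By the elementary lemma on convergence of supergradients of pointwise-convergent concave functions --- this is the Lieb--Simon step --- once we know that $g$ is \emph{differentiable} at $0$ it follows that $-\int\rho_{\alpha}\tilde{W}\,dx\to g'(0)$; and the easy direction of Danskin's formula, testing with any Pekar minimizer, forces $g'(0)=-\int_{\mathbb{R}^{3}}|u_{V_{R}}|^{2}\tilde{W}\,dx$.

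Thus the crux is to prove that $g(\delta)=e(V_{R}+\delta\tilde{W})$ is differentiable at $\delta=0$, even though by Theorem~1 the Pekar minimizer of the radial potential $V_{R}$ is nonradial. The superdifferential $\partial g(0)$ equals the closed convex hull of $\{-\int_{\mathbb{R}^{3}}|\psi|^{2}\tilde{W}\,dx:\psi\ \text{a minimizer of}\ e(V_{R})\}$ as soon as minimizing sequences for $e(V_{R})$ are precompact up to the symmetries of the problem; this precompactness is exactly the content of Lions's concentration-compactness analysis of~(\ref{eq:1-4}), and there is no translational loss of compactness because $V_{R}\ge0$ binds, $e(V_{R})<e(0)$ (test with a translate of the free Pekar minimizer supported in $\{V_{R}\equiv1\}$). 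Under the hypothesis that the Pekar minimizer is unique up to rotation, the minimizer set is the single orbit $\{u_{V_{R}}(\mathcal{R}\,\cdot\,):\mathcal{R}\in SO(3)\}$, and since $\tilde{W}$ is radial the change of variables $x\mapsto\mathcal{R}^{-1}x$ gives $\int_{\mathbb{R}^{3}}|u_{V_{R}}(\mathcal{R}x)|^{2}\tilde{W}(x)\,dx=\int_{\mathbb{R}^{3}}|u_{V_{R}}|^{2}\tilde{W}\,dx$ for every $\mathcal{R}$. Hence $\partial g(0)$ reduces to a single point and $g$ is differentiable at $0$. This is the step I expect to require the most care, since it is where concentration-compactness and the uniqueness-up-to-rotation hypothesis enter in an essential way.

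It then remains to rewrite the limit and to enlarge the class of test functions. By Fubini, the change of variables $x\mapsto\mathcal{R}^{-1}x$, and invariance of the Haar measure under $\mathcal{R}\mapsto\mathcal{R}^{-1}$, one has $\int_{\mathbb{R}^{3}}|u_{V_{R}}|^{2}\tilde{W}\,dx=\int_{\mathbb{R}^{3}}\bigl[\int_{SO(3)}|u_{V_{R}}(\mathcal{R}x)|^{2}\,d\gamma(\mathcal{R})\bigr]W(x)\,dx$, which is the right-hand side of~(\ref{eq:rv}); together with the preceding steps this proves~(\ref{eq:rv}) for $W\in C_{c}^{\infty}(\mathbb{R}^{3})$. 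Finally, standard a priori bounds give $\langle\Psi_{\alpha}^{V_{R}},\mathbf{p}^{2}\Psi_{\alpha}^{V_{R}}\rangle=O(\alpha^{2})$, so by the diamagnetic inequality and the Sobolev embedding $H^{1}\hookrightarrow L^{6}$ the densities $\rho_{\alpha}$ are bounded in $L^{1}(\mathbb{R}^{3})\cap L^{3}(\mathbb{R}^{3})$, and by the exponential localization of the ground state in the potential well they are uniformly tight; since the limiting density $\int_{SO(3)}|u_{V_{R}}(\mathcal{R}\,\cdot\,)|^{2}\,d\gamma(\mathcal{R})$ also lies in $L^{1}\cap L^{3}$, a routine approximation argument extends~(\ref{eq:rv}) from $C_{c}^{\infty}$ to all $W\in L^{3/2}(\mathbb{R}^{3})+L^{\infty}(\mathbb{R}^{3})$.
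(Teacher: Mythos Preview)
Your proposal is correct and follows essentially the same route as the paper: reduce to the radial average $\tilde W=\langle W\rangle$ using the radiality of the ground-state density, bound the rescaled density integrals above and below by difference quotients of $\delta\mapsto e(V_R+\delta\tilde W)$ via the variational principle and~(\ref{eq:1-6}), and then establish differentiability at $\delta=0$ using Lions's concentration-compactness together with the fact that $\int\tilde W\,|u_{V_R}(\mathcal R\,\cdot)|^{2}$ is constant on the $SO(3)$-orbit. The only organizational difference is that the paper proves this differentiability result (its Theorem~6) directly for all radial $Z\in L^{3/2}+L^{\infty}$, so it never needs your separate extension step from $C_c^\infty$ to $L^{3/2}+L^{\infty}$ and the attendant a priori bounds on $\rho_\alpha$.
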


We now describe the strategy for proving Theorem 2. To the Hamiltonian
$H_{\alpha}^{V_{R}}$ we add $\delta$ times the rotational average
$\left\langle W\right\rangle $(x) of a test potential $W(x)\in L^{3/2}(\mathbb{R}^{3})+L^{\infty}(\mathbb{R}^{3})$
that is scaled appropriately: 
\begin{equation}
H_{\alpha}^{V_{R}}-\delta\alpha^{2}\left\langle W\right\rangle \left(\alpha x\right),\label{eq:Pert}
\end{equation}
where $\left\langle W\right\rangle =\int_{SO(3)}W(\mathcal{R}x)\,d\gamma(\mathcal{R})$.
Denoting $E_{\alpha}^{V_{R}+\delta\left\langle W\right\rangle }$
to be the ground-state energy of the Hamiltonian in (\ref{eq:Pert}),
it follows from the variational principle that
\begin{align*}
E_{\alpha}^{V_{R}+\delta\left\langle W\right\rangle } & \leq\left\langle \Psi_{\alpha}^{V_{R}},\,H_{\alpha}^{V_{R}}\Psi_{\alpha}^{V_{R}}\right\rangle -\delta\alpha^{2}\int_{\mathbb{R}^{3}}\left\langle W\right\rangle (\alpha x)\|\Psi_{\alpha}^{V_{R}}\|_{\mathcal{F}}^{2}\left(x\right)\,dx\\
 & =E_{\alpha}^{V_{R}}-\frac{\delta}{\alpha}\int_{\mathbb{R}^{3}}\left\langle W\right\rangle (x)\|\Psi_{\alpha}^{V_{R}}\|_{\mathcal{F}}^{2}\left(\frac{x}{\alpha}\right)\,dx.
\end{align*}

For $\delta>0$, by subraction and division
\[
\frac{E_{\alpha}^{V_{R}+\delta\left\langle W\right\rangle }-E_{\alpha}^{V_{R}}}{\delta\alpha^{2}}\leq-\frac{1}{\alpha^{3}}\int_{\mathbb{R}^{3}}\left\langle W\right\rangle (x)\|\Psi_{\alpha}^{V_{R}}\|_{\mathcal{F}}^{2}\left(\frac{x}{\alpha}\right)\,dx.
\]
By (\ref{eq:1-6}), 
\begin{align}
\frac{e\left(V_{R}+\delta\left\langle W\right\rangle \right)-e\left(V_{R}\right)}{\delta} & \leq\liminf_{\alpha\rightarrow\infty}-\frac{1}{\alpha^{3}}\int_{\mathbb{R}^{3}}\left\langle W\right\rangle (x)\|\Psi_{\alpha}^{V_{R}}\|_{\mathcal{F}}^{2}\left(\frac{x}{\alpha}\right)dx\label{eq:i1}\\
 & =\liminf_{\alpha\rightarrow\infty}-\frac{1}{\alpha^{3}}\int_{\mathbb{R}^{3}}W(x)\|\Psi_{\alpha}^{V_{R}}\|_{\mathcal{F}}^{2}\left(\frac{x}{\alpha}\right)dx;\label{eq:i2}
\end{align}
Above, (\ref{eq:i2}) follows from Fubini's theorem and that $\|\Psi_{\alpha}^{V_{R}}\|_{\mathcal{F}}^{2}(x)$
is a radial function. 

When $\delta<0$, the inequality in (\ref{eq:i1}) is merely reversed
with the ``$\liminf$'' replaced by ``$\limsup$''. Hence, Theorem
2 will follow if the map $\delta\mapsto e\left(V_{R}+\delta\left\langle W\right\rangle \right)$
is differentiable at $\delta=0$. Because the minimization problem
for the energy $e(V_{R})$ does not admit a unique minimizer, the
map $\delta\mapsto e(V_{R}+\delta J)$ cannot be differentiable for
every $J\in L^{3/2}(\mathbb{R}^{3})+L^{\infty}(\mathbb{R}^{3})$.
However, since (by assumption) the minimizers $u_{V_{R}}$ for the
energy $e(V_{R})$ are unique up to rotation, we will show that for
all \textit{radial} $Z\in L^{3/2}\left(\mathbb{R}^{3}\right)+L^{\infty}\left(\mathbb{R}^{3}\right)$,
\begin{equation}
\lim_{\delta\rightarrow0}\frac{e\left(V_{R}+\delta Z\right)-e\left(V_{R}\right)}{\delta}=-\int_{\mathbb{R}^{3}}Z(x)\left|u_{V_{R}}(x)\right|^{2}dx.\label{eq:i3}
\end{equation}
Choosing $Z(x)=\left\langle W\right\rangle (x)$ in (\ref{eq:i3}),
Theorem 2 then follows from Fubini's theorem. 

The paper is organized as follows. A proof of Theorem 1 will be given
in Section 2 below. In Section 3, we establish the crucial differentiation
result, (\ref{eq:i3}), and then prove Theorem 2. 

\section{Nonradiality of the Pekar Minimizers}

Let the Pekar functional $\mathcal{E}_{V}$ be as given in (\ref{eq:1-4})
above. We consider the potential $V_{R}\in C_{c}^{\infty}(\mathbb{R}^{3})$,
$0\leq V_{R}\leq1$ given in (\ref{eq:P.1}) above. The corresponding
Pekar problem is 
\begin{equation}
e(V_{R})=\inf\left\{ \mathcal{E}_{V_{R}}\left(\varphi\right):\ \|\varphi\|_{2}=1\right\} .\label{eq:R.0}
\end{equation}
Since $V_{R}$ vanishes at infinity, by Lions' Concentration Compactness
Principle, we have the following: 
\begin{lem}
The minimization problem in (\ref{eq:R.0}) for the energy $e(V_{R})$
admits a minimizer. 
\end{lem}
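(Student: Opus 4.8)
The plan is to run Lions' concentration-compactness method on a minimizing sequence for $e(V_{R})$, the only ingredient beyond the standard machinery being a strict binding inequality extracted from Lieb's translation-invariant minimizer. Write $\mathcal{E}_{0}$, $e(0)$ for the functional in (\ref{eq:1-4}) and its infimum when $V\equiv0$, and $e^{\infty}_{\mu}$, $e^{V_{R}}_{\mu}$ for the infima of $\mathcal{E}_{0}$, respectively $\mathcal{E}_{V_{R}}$, over $\|\varphi\|_{2}^{2}=\mu$. First I would record the routine a priori bounds: the Hardy--Littlewood--Sobolev and Sobolev inequalities control the Hartree term by $\varepsilon\|\nabla\varphi\|_{2}^{2}+C_{\varepsilon}$ for any $\varepsilon>0$, while $0\le V_{R}\le1$ gives $\int V_{R}|\varphi|^{2}\le1$, so $\mathcal{E}_{V_{R}}(\varphi)\ge\tfrac12\|\nabla\varphi\|_{2}^{2}-C$; hence $e(V_{R})$ is finite and every minimizing sequence $(\varphi_{n})$ is bounded in $H^{1}(\mathbb{R}^{3})$. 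Testing with the dilations $\lambda^{3/2}\varphi(\lambda\cdot)$ (norm-preserving, scaling the Dirichlet term by $\lambda^{2}$ and the Hartree term by $\lambda$) shows $e(0)<0$, hence $e(V_{R})\le e(0)<0$ since $V_{R}\ge0$; optimizing the dilation yields the scaling identity $e^{\infty}_{\mu}=\mu^{3}e(0)$.

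The crucial input is the strict binding inequality $e(V_{R})<e(0)$. Let $\phi>0$ be the radial minimizer of $\mathcal{E}_{0}$ at mass one furnished by [Lb1977]. For any $a\in\mathbb{R}^{3}$ one computes
\[
\mathcal{E}_{V_{R}}\big(\phi(\cdot-a)\big)=e(0)-\int_{\mathbb{R}^{3}}V_{R}(x+a)\,|\phi(x)|^{2}\,dx,
\]
and the subtracted integral is strictly positive because $\phi>0$ everywhere while $V_{R}\ge0$ is not identically zero; thus $e(V_{R})<e(0)$.

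Next I would apply Lions' lemma to the probability densities $\rho_{n}=|\varphi_{n}|^{2}$. Vanishing is impossible: it forces $\int\int\rho_{n}(x)\rho_{n}(y)|x-y|^{-1}\,dx\,dy\to0$ and, since $V_{R}$ is compactly supported, $\int V_{R}\rho_{n}\to0$, so $\liminf_{n}\mathcal{E}_{V_{R}}(\varphi_{n})\ge0>e(V_{R})$, a contradiction. Dichotomy into masses $\theta\in(0,1)$ and $1-\theta$ would give $e(V_{R})\ge e^{V_{R}}_{\theta}+e^{\infty}_{1-\theta}$ (the far-away piece leaving $\mathrm{supp}\,V_{R}$ and so seeing only $\mathcal{E}_{0}$), which is ruled out by the strict subadditivity inequalities built from $e(V_{R})<e(0)$ and the strictly subadditive scaling $e^{\infty}_{\mu}=\mu^{3}e(0)<0$. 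The surviving alternative is compactness: there are $y_{n}$ with $\rho_{n}(\cdot+y_{n})$ tight, so $\varphi_{n}(\cdot+y_{n})\rightharpoonup w$ in $H^{1}$ and $\to w$ in $L^{2}$, $\|w\|_{2}=1$. The $y_{n}$ stay bounded, since $|y_{n}|\to\infty$ would force $\int V_{R}\rho_{n}\to0$ and then $e(V_{R})=\lim\mathcal{E}_{V_{R}}(\varphi_{n})\ge\liminf\mathcal{E}_{0}\big(\varphi_{n}(\cdot+y_{n})\big)\ge\mathcal{E}_{0}(w)\ge e(0)$, contradicting the strict gain. Along a subsequence $y_{n}\to y$, the function $\varphi:=w(\cdot-y)$ has $\|\varphi\|_{2}=1$, and weak lower semicontinuity of the Dirichlet integral together with continuity of the Hartree term and of $\varphi\mapsto\int V_{R}|\varphi|^{2}$ along $L^{2}$-convergent $H^{1}$-bounded sequences give $\mathcal{E}_{V_{R}}(\varphi)\le\liminf_{n}\mathcal{E}_{V_{R}}(\varphi_{n})=e(V_{R})$. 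Hence $\varphi$ is a minimizer.

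The step I expect to be the main obstacle is the exclusion of dichotomy --- establishing the full family of strict binding inequalities $e(V_{R})<e^{V_{R}}_{\theta}+e^{\infty}_{1-\theta}$ for all $\theta\in(0,1)$, rather than only the endpoint case $\theta=0$, which is precisely $e(V_{R})<e(0)$ handled above (the endpoint $\theta\to1$ being the other delicate case). It is the cubic homogeneity of the Pekar functional, which renders the translation-invariant problem at infinity strictly subadditive, that lets one promote the single strict gain $e(V_{R})<e(0)$ to the full set of inequalities; this is exactly the structure packaged into the concentration-compactness principle cited in the statement, so in practice one verifies its hypotheses, the only nonstandard input being the binding inequality $e(V_{R})<e(0)$.
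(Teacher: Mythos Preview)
Your proposal is correct and takes essentially the same approach as the paper: the paper's proof is simply a citation of Theorem~III.1 in [Ls1984], i.e.\ Lions' concentration-compactness principle, and what you have written is a faithful sketch of that very argument (with the strict binding inequality $e(V_{R})<e(0)$ supplying the only nontrivial hypothesis). So you have not diverged from the paper's route, only unpacked it.
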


\begin{proof}
Theorem III.1 in {[}Ls1984{]}. 
\end{proof}
The goal of this section is to show that the minimizers in the above
Lemma for the energy $e(V_{R})$ are nonradial. But first, we consider
the radial minimization problem: 
\begin{lem}
The minimization problem 
\begin{equation}
e^{\text{rad}}\left(V_{R}\right)=\inf\left\{ \mathcal{E}_{V_{R}}\left(\varphi\right):\ \varphi\in H_{\text{rad}}^{1}\left(\mathbb{R}^{3}\right)\ \text{and}\ \|\varphi\|_{2}=1\right\} \label{eq:R.1}
\end{equation}
admits a (radial) minimizer. 
\end{lem}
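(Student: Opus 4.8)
The plan is to run the direct method of the calculus of variations on a \emph{radial} minimizing sequence. Since radial functions cannot be translated to escape, the only obstruction to compactness is a loss of $L^{2}$-mass towards spatial infinity, and this will be excluded by a scaling/sub-additivity argument together with the strict negativity of $e^{\mathrm{rad}}(V_{R})$; everything else is the Strauss radial compactness lemma plus weak lower semicontinuity.

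First I would record two facts. (i) $\mathcal{E}_{V_{R}}$ is bounded below on the unit sphere of $L^{2}$, and minimizing sequences are bounded in $H^{1}(\mathbb{R}^{3})$: writing $D(\varphi):=\int\int_{\mathbb{R}^{3}\times\mathbb{R}^{3}}|\varphi(x)|^{2}|\varphi(y)|^{2}|x-y|^{-1}\,dx\,dy\ge 0$, the Hardy--Littlewood--Sobolev inequality gives $D(\varphi)\le C\|\varphi\|_{12/5}^{4}$, and interpolation $\|\varphi\|_{12/5}\le\|\varphi\|_{2}^{3/4}\|\varphi\|_{6}^{1/4}$ together with Sobolev $\|\varphi\|_{6}\le C_{S}\|\nabla\varphi\|_{2}$ yields $D(\varphi)\le C'\|\nabla\varphi\|_{2}$ on $\|\varphi\|_{2}=1$, so that $\mathcal{E}_{V_{R}}(\varphi)\ge\|\nabla\varphi\|_{2}^{2}-C'\|\nabla\varphi\|_{2}-\|V_{R}\|_{\infty}$. (ii) $e^{\mathrm{rad}}(V_{R})<0$: for a fixed nonzero radial Schwartz $\varphi_{0}$ with $\|\varphi_{0}\|_{2}=1$, the dilations $\varphi_{\lambda}(x)=\lambda^{3/2}\varphi_{0}(\lambda x)$ are radial of unit norm and satisfy $\mathcal{E}_{V_{R}}(\varphi_{\lambda})\le\lambda^{2}\|\nabla\varphi_{0}\|_{2}^{2}-\lambda D(\varphi_{0})$, which is negative for $\lambda$ small since $D(\varphi_{0})>0$.

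Next I take a radial minimizing sequence $\{\varphi_{n}\}$, $\|\varphi_{n}\|_{2}=1$, for $e^{\mathrm{rad}}(V_{R})$; by (i) it is bounded in $H^{1}$, so along a subsequence $\varphi_{n}\rightharpoonup\varphi$ weakly in $H^{1}(\mathbb{R}^{3})$. By Rellich, $\varphi_{n}\to\varphi$ in $L^{2}_{\mathrm{loc}}$, hence $\int V_{R}|\varphi_{n}|^{2}\to\int V_{R}|\varphi|^{2}$ because $V_{R}$ is bounded with compact support; by the Strauss radial compactness lemma $\varphi_{n}\to\varphi$ strongly in $L^{p}(\mathbb{R}^{3})$ for $2<p<6$, so in particular $|\varphi_{n}|^{2}\to|\varphi|^{2}$ in $L^{6/5}$ and therefore $D(\varphi_{n})\to D(\varphi)$ by boundedness of the HLS bilinear form on $L^{6/5}$. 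Combining with weak lower semicontinuity of $\varphi\mapsto\|\nabla\varphi\|_{2}^{2}$ gives $\mathcal{E}_{V_{R}}(\varphi)\le\liminf_{n}\mathcal{E}_{V_{R}}(\varphi_{n})=e^{\mathrm{rad}}(V_{R})$.

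The main obstacle is the final step: showing $\|\varphi\|_{2}=1$. Set $m:=\|\varphi\|_{2}\in[0,1]$. If $m=0$ then $\mathcal{E}_{V_{R}}(\varphi)=0>e^{\mathrm{rad}}(V_{R})$, a contradiction. If $0<m<1$, put $\tilde\varphi:=\varphi/m$, which is radial of unit norm; using $\|\nabla(m\tilde\varphi)\|_{2}^{2}=m^{2}\|\nabla\tilde\varphi\|_{2}^{2}$, $D(m\tilde\varphi)=m^{4}D(\tilde\varphi)$ and $\int V_{R}|m\tilde\varphi|^{2}=m^{2}\int V_{R}|\tilde\varphi|^{2}$ one computes
\[
\mathcal{E}_{V_{R}}(\varphi)=m^{2}\,\mathcal{E}_{V_{R}}(\tilde\varphi)+m^{2}(1-m^{2})\,D(\tilde\varphi)\;\ge\;m^{2}\,\mathcal{E}_{V_{R}}(\tilde\varphi)\;\ge\;m^{2}\,e^{\mathrm{rad}}(V_{R})\;>\;e^{\mathrm{rad}}(V_{R}),
\]
the last inequality using $e^{\mathrm{rad}}(V_{R})<0$; again a contradiction. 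Hence $m=1$, so $\varphi$ is admissible and $\mathcal{E}_{V_{R}}(\varphi)=e^{\mathrm{rad}}(V_{R})$, i.e. $\varphi$ is the desired radial minimizer. (Alternatively one could invoke Lions' concentration--compactness principle restricted to the radial subspace, but in this translation-free setting the scaling argument above is self-contained.)
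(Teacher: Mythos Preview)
Your proof is correct and follows essentially the approach the paper indicates: the direct method on a radial minimizing sequence, with Strauss's radial lemma supplying the compactness needed for the nonlocal term, weak lower semicontinuity for the gradient, and a scaling argument exploiting $e^{\mathrm{rad}}(V_{R})<0$ to recover the full $L^{2}$-mass of the weak limit. The only minor variation is in how you pass to the limit in the Coulomb term: the paper (in the analogous ``Step 3'' it points to) uses Newton's theorem together with the pointwise Strauss decay $|u(x)|\lesssim |x|^{-1}\|u\|_{H^{1}}$ to control the tail directly, whereas you use Strauss $L^{p}$-compactness for $2<p<6$ combined with the Hardy--Littlewood--Sobolev inequality on $L^{6/5}$; both routes are standard and yield the same conclusion.
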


\noindent We do not provide a proof of the above Lemma, because it
is standard (cf. Remark III.2 in {[}Ls1984{]} and also {[}Ls1981{]})
and proceeds along the lines of the argument from ``Step 3'' in
the proof of Theorem 3 below; the main ingredient is the well-known
observation of W.A. Strauss (``Radial Lemma 1'' in {[}Ss1977{]})
that any $u\in H_{\text{rad}}^{1}\left(\mathbb{R}^{3}\right)$ satisfies
\begin{equation}
\left|u(x)\right|\leq\frac{\sqrt{2}\left|\mathbb{S}^{2}\right|^{-\frac{1}{2}}\|u\|_{H^{1}}}{\left|x\right|}\ \text{for\ a.e. \ensuremath{|x|\geq2.}}\label{eq:R.2}
\end{equation}
Indeed, with $u\in C_{c}^{\infty}\left(\mathbb{R}^{3}\right)\cap H_{\text{rad}}^{1}\left(\mathbb{R}^{3}\right)$
(we abuse notation by writing $u(x)=u(r)$ with $r=\left|x\right|$),
\[
\left(r^{2}u^{2}\right)_{r}=2\left(ru\right)_{r}\left(ru\right)\leq\left(ru\right)_{r}^{2}+\left(ru\right)^{2}=r^{2}\left(u_{r}^{2}+u^{2}\right)+\left(ru^{2}\right)_{r}.
\]
Then for all $L\geq2$, 
\[
\frac{u^{2}\left(L\right)}{2}L^{2}\leq u^{2}(L)\left(L^{2}-L\right)\leq\int_{0}^{L}\left(u_{r}^{2}+u^{2}\right)r^{2}\,dr\leq\left|\mathbb{S}^{2}\right|^{-1}\left\Vert u\right\Vert _{_{H^{1}}}^{2},
\]
and (\ref{eq:R.2}) follows from a density argument. 

The minimizers for $e^{\text{rad}}\left(V_{R}\right)$ from the above
Lemma play an important role in our proof of nonradiality: 
\begin{lem}
Let the potential $V_{R}$ be as given in (\ref{eq:P.1}), and let
the energies $e\left(V_{R}\right)$ and $e^{\text{rad}}\left(V_{R}\right)$
be as defined by the minimization problems in (\ref{eq:R.0}) and
(\ref{eq:R.1}) respectively. For $R$ large, 
\begin{equation}
e\left(V_{R}\right)<e^{\text{rad}}\left(V_{R}\right).\label{eq:N.1}
\end{equation}
\end{lem}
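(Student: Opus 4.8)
The plan is to exhibit, for $R$ large, a \emph{nonradial} competitor whose Pekar energy lies strictly below $e^{\mathrm{rad}}(V_R)$; the natural candidate is a translate of Lieb's translation-invariant minimizer, placed deep inside the annular well $\{2 \le |x| \le R\}$. Let $Q$ be the (unique up to translation) minimizer of the free Pekar functional $\mathcal{E}_0$, with $\mathcal{E}_0(Q) = e(0) =: e_0 < 0$, and recall that $Q$ is radial about its center, smooth, positive, and decays exponentially. For a point $a \in \mathbb{R}^3$ with $|a| = (R+2)/2$ (the midpoint of the well), set $\varphi_a(x) := Q(x - a)$. Then $\|\varphi_a\|_2 = 1$ and
\[
\mathcal{E}_{V_R}(\varphi_a) = \mathcal{E}_0(\varphi_a) - \int_{\mathbb{R}^3} V_R(x)\,|Q(x-a)|^2\,dx = e_0 - \int_{\mathbb{R}^3} V_R(y+a)\,|Q(y)|^2\,dy.
\]
Because $V_R \equiv 1$ on $\{2 \le |x| \le R\}$ and $\mathrm{dist}(a, \partial(\text{well})) = (R-2)/2 \to \infty$, and because $|Q|^2$ has unit mass with exponentially small tails, we get $\int V_R(y+a)|Q(y)|^2\,dy \to 1$ as $R \to \infty$; hence $e(V_R) \le \mathcal{E}_{V_R}(\varphi_a) \le e_0 - 1 + o(1)$ as $R \to \infty$.

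The other half is an \emph{upper bound that goes the right way} for the radial problem: I claim $e^{\mathrm{rad}}(V_R) \to e_0$ as $R \to \infty$ as well — but crucially from \emph{above} in the sense that $\liminf_{R\to\infty} e^{\mathrm{rad}}(V_R) \ge e_0$, which combined with the display above gives $e(V_R) \le e_0 - 1 + o(1) < e_0 \le e^{\mathrm{rad}}(V_R) + o(1)$, i.e. the strict inequality for $R$ large. To see $\liminf_R e^{\mathrm{rad}}(V_R) \ge e_0$: let $\varphi_R \in H^1_{\mathrm{rad}}$ be a radial minimizer for $e^{\mathrm{rad}}(V_R)$ (Lemma~4). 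Since $0 \le V_R \le 1$ we have $\mathcal{E}_{V_R}(\varphi_R) \ge \mathcal{E}_0(\varphi_R) - 1 \ge e_0 - 1$, which only gives a lower bound of $e_0 - 1$, not $e_0$ — so this crude estimate is not enough. Instead I use that $\varphi_R$ is \emph{radial} together with Strauss's inequality~(\ref{eq:R.2}): a normalized radial $H^1$ function with bounded $H^1$-norm satisfies $|\varphi_R(x)|^2 \le C\|\varphi_R\|_{H^1}^2/|x|^2$ for $|x| \ge 2$, so its mass cannot concentrate in the bulk of the annulus. More precisely, one first checks the radial minimizers have $\|\varphi_R\|_{H^1}$ bounded uniformly in $R$ (from $\mathcal{E}_{V_R}(\varphi_R) \le \mathcal{E}_{V_R}(Q_{\mathrm{rad}})$ for a fixed radial trial function, controlling the kinetic term against the fixed lower bound on the full functional). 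Given this, $\int_{\{|x|\le \rho\}} V_R |\varphi_R|^2 \le \int_{\{|x|\le\rho\}}|\varphi_R|^2 \le C\rho$ and $\int_{\{|x|\ge\rho\}}V_R|\varphi_R|^2 \le \int_{\{|x|\ge\rho\}} C/|x|^2 \le C/\rho$ are \emph{not} simultaneously small — so I must argue differently: the point is that a radial profile carrying essentially all its mass in a thin shell of radius $\sim R$ pays an enormous self-interaction penalty, since $\iint |\varphi_R(x)|^2|\varphi_R(y)|^2/|x-y|\,dx\,dy \to 0$ when the support spreads to radius $R$ (the density becomes diffuse on the sphere), so the Coulomb gain is lost and $\mathcal{E}_{V_R}(\varphi_R) \ge \mathcal{E}_0(\varphi_R) - 1 + o(1)$ cannot beat $e_0 - 1$; whereas if instead the mass stays in a bounded region, $V_R|\varphi_R|^2$ integrates to $o(1)$ by Strauss and dominated convergence (as $R\to\infty$ the potential is supported where the bounded-region mass is negligible), giving $\mathcal{E}_{V_R}(\varphi_R) \ge e_0 - o(1)$. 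A concentration–compactness dichotomy on the sequence $(\varphi_R)$ makes this rigorous: vanishing and splitting are excluded by the strictness of $e_0 < 0$, and the surviving compact piece either localizes near the origin (where $V_R = 0$ eventually, giving $\ge e_0$) or runs off to radius $\to\infty$ (where radiality forces the mass onto a large sphere, destroying the Coulomb term, also giving $\ge e_0$).

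The main obstacle is precisely this last step — showing $\liminf_{R\to\infty} e^{\mathrm{rad}}(V_R) \ge e_0$, i.e. that a \emph{radial} function cannot exploit the annular well as efficiently as a localized bump can. The geometric heart of it is that a radial density supported in the shell $\{|x| \approx t\}$ with $t$ large is forced to spread over a sphere of area $\sim t^2$, so its Coulomb self-energy is $O(1/t) \to 0$, killing the $-\iint$ term that makes $e_0$ strictly negative; this is where the paper's remark that the argument "proceeds along the lines of Step 3 in the proof of Theorem 3" and relies on Strauss's Radial Lemma enters decisively. Once the two bounds $e(V_R) \le e_0 - 1 + o(1)$ and $e^{\mathrm{rad}}(V_R) \ge e_0 - o(1)$ are in hand, the desired strict inequality~(\ref{eq:N.1}) holds for all $R$ sufficiently large since $e_0 - 1 < e_0$.
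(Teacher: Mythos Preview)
Your trial function---a translate of Lieb's minimizer $Q$ placed at the center of the annulus---is exactly what the paper uses, and your bound $e(V_R)\le e_0-1+o(1)$ is correct. The gap is in the second half: the claim $\liminf_{R\to\infty} e^{\mathrm{rad}}(V_R)\ge e_0$ is \emph{false}. Take $Q$ itself, centered at the origin, as a radial competitor: since $V_R\equiv 1$ on $\{2\le|x|\le R\}$ and $Q$ has strictly positive mass there, one has $\mathcal{E}_{V_R}(Q)=e_0-\int V_R|Q|^2\le e_0-\int_{\{2\le|x|\le 3\}}|Q|^2<e_0$ for every $R>3$. So $e^{\mathrm{rad}}(V_R)<e_0$ for all large $R$. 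Your dichotomy breaks in the ``localized'' branch: if the radial minimizers $\varphi_R$ converge to a compact limit, that limit must (by Lieb's uniqueness and radiality) be $Q$ centered at the origin, and $Q$ does \emph{not} avoid the well---the inner edge of the well is fixed at $|x|=1$, independent of $R$, so $\int V_R|\varphi_R|^2$ tends to a positive constant, not $o(1)$. The sentence ``where $V_R=0$ eventually'' is simply wrong for this potential.

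The paper does not attempt to separate $e(V_R)$ and $e^{\mathrm{rad}}(V_R)$ by absolute thresholds. It argues by contradiction: if along a sequence $R_n\to\infty$ the radial minimizer $\rho_{R_n}$ beats the translated $Q_{R_n}$, then the inequality
\[
0\le \mathcal{E}_0(\rho_{R_n})-e_0\le \int V_{R_n}|\rho_{R_n}|^2-\int V_{R_n}|Q_{R_n}|^2
\]
together with $\int V_{R_n}|Q_{R_n}|^2\to 1$ forces \emph{both} $\mathcal{E}_0(\rho_{R_n})\to e_0$ \emph{and} $\int V_{R_n}|\rho_{R_n}|^2\to 1$, the latter implying $\int_{\{|x|\le 1\}}|\rho_{R_n}|^2\to 0$. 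Passing to a weak $H^1$-limit $\rho$ (Strauss's lemma is used here to get convergence of the Coulomb self-interaction), one finds that $\rho$ is a radial free-Pekar minimizer vanishing on the unit ball, contradicting Lieb's uniqueness (the only radial free minimizer is $Q$, strictly positive at the origin). The genuine obstruction is not ``radial functions cannot reach $e_0$''---they can, and do better---but rather that a radial function cannot \emph{simultaneously} nearly minimize $\mathcal{E}_0$ and place almost all of its mass outside the unit ball.
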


\begin{proof}
Essential to the proof is the \textit{Free Pekar Problem} (i.e. without
an external potential): 
\begin{equation}
e(0)=\inf_{\|\psi\|_{2}=1}\mathcal{E}_{0}\left(\psi\right)\label{eq:N.2}
\end{equation}
where
\begin{equation}
\mathcal{E}_{0}\left(\psi\right)=\int_{\mathbb{R}^{3}}|\nabla\psi|^{2}dx-\int\int_{\mathbb{R}^{3}\times\mathbb{R}^{3}}\frac{\left|\psi(x)\right|^{2}\left|\psi(y)\right|^{2}}{\left|x-y\right|}dx\,dy.\label{eq:N.3}
\end{equation}

We recall that the problem in (\ref{eq:N.2}) admits a symmetric decreasing
minimizer $Q\in H^{1}(\mathbb{R}^{3})$ with $\|Q\|_{2}=1$ (Theorem
7 in {[}Lb1977{]}). We consider the translate 
\begin{equation}
Q_{R}(x):=Q(x-\zeta_{R})\ \text{with}\ \zeta_{R}=\left(\frac{R+2}{2},0,0\right).\label{eq:N.4}
\end{equation}
Since the functional in (\ref{eq:N.3}) is invariant under translations,
\[
\mathcal{E}_{0}\left(Q_{R}\right)=\mathcal{E}_{0}\left(Q\right)=e(0)\ \text{for all}\ R.
\]
Most importantly, for $R$ large the nonradial function $Q_{R}$ is
concentrated in the potential well of $V_{R}$ located at $\left\{ 2\leq\left|x\right|\le R\right\} $:
Indeed, for $R>2$
\[
\int_{\left\{ 2\leq|x|\leq R\right\} }\left|Q_{R}(x)\right|^{2}dx\geq\int_{\left\{ \left|x-\zeta_{R}\right|\leq\frac{R-2}{2}\right\} }\left|Q_{R}(x)\right|^{2}dx=\int_{\left\{ \left|x\right|\leq\frac{R-2}{2}\right\} }\left|Q(x)\right|^{2}dx
\]
and $\|Q\|_{2}=1$, so 
\begin{equation}
\lim_{R\rightarrow\infty}\int_{\left\{ 2\leq\left|x\right|\leq R\right\} }\left|Q_{R}(x)\right|^{2}dx=1.\label{eq:N.5}
\end{equation}

\noindent \textbf{\bigskip{}
}

\noindent \textbf{Step 1 }(Variational Principle). For all $R$, by
the variational principle, 
\begin{equation}
e(V_{R})\leq\mathcal{E}_{V_{R}}\left(Q_{R}\right)=e(0)-\int_{\mathbb{R}^{3}}V_{R}(x)\left|Q_{R}(x)\right|^{2}dx.\label{eq:N.6}
\end{equation}
By the above Lemma, there is a radial function $\rho_{R}\in H^{1}\left(\mathbb{R}^{3}\right)$
with $\|\rho_{R}\|_{2}=1$ and $\mathcal{E}_{V_{R}}\left(\rho_{R}\right)=e^{\text{rad}}\left(V_{R}\right).$
Hence the claimed inequality in () will follow if we can prove for
$R$ large, 
\begin{equation}
\mathcal{E}_{V_{R}}\left(Q_{R}\right)<\mathcal{E}_{V_{R}}\left(\rho_{R}\right).\label{eq:N.7}
\end{equation}

\noindent \textbf{\bigskip{}
}

\noindent \textbf{Step 2 }(Proof by Contradiction). Suppose (\ref{eq:N.7})
is not true. Then there is a sequence $\{R_{n}\}_{n=1}^{\infty}$
where $R_{n}\rightarrow\infty$ as $n\rightarrow\infty$ and $\mathcal{E}_{V_{R_{n}}}\left(Q_{R_{n}}\right)\geq\mathcal{E}_{V_{R_{n}}}\left(\rho_{R_{n}}\right)$,
i.e. 
\[
e(0)-\int_{\mathbb{R}^{3}}V_{R_{n}}(x)\left|Q_{R_{n}}(x)\right|^{2}dx\geq\mathcal{E}_{0}\left(\rho_{R_{n}}\right)-\int_{\mathbb{R}^{3}}V_{R_{n}}(x)\left|\rho_{R_{n}}(x)\right|^{2}dx.
\]
Then, since $e(0)\leq\mathcal{E}_{0}(\rho_{R_{n}})$, 
\begin{equation}
0\leq\mathcal{E}_{0}\left(\rho_{R_{n}}\right)-e(0)\leq\int_{\mathbb{R}^{3}}V_{R_{n}}(x)\left|\rho_{R_{n}}(x)\right|^{2}dx-\int_{\mathbb{R}^{3}}V_{R_{n}}(x)\left|Q_{R_{n}}(x)\right|^{2}dx.\label{eq:N.8}
\end{equation}

We recall that $0\leq V_{R_{n}}(x)\leq1$ and $V_{R_{n}}(x)=1$ when
$2\leq\left|x\right|\leq R_{n}$. By Hölder's inequality, 
\[
\int_{\left\{ 2\leq\left|x\right|\leq R_{n}\right\} }\left|Q_{R_{n}}(x)\right|^{2}dx\leq\int_{\mathbb{R}^{3}}V_{R_{n}}(x)\left|Q_{R_{n}}(x)\right|^{2}dx\leq1.
\]
Then, by our observation in (\ref{eq:N.5}), 
\begin{equation}
\lim_{n\rightarrow\infty}\int_{\mathbb{R}^{3}}V_{R_{n}}(x)\left|Q_{R_{n}}(x)\right|^{2}dx=1.\label{eq:N.9}
\end{equation}
Furthermore, by the inequalities in (\ref{eq:N.8}) and Hölder's inequality,
\[
\int_{\mathbb{R}^{3}}V_{R_{n}}(x)\left|Q_{R_{n}}(x)\right|^{2}dx\leq\int_{\mathbb{R}^{3}}V_{R_{n}}(x)\left|\rho_{R_{n}}(x)\right|^{2}dx\leq1.
\]
We conclude from (\ref{eq:N.9}) that 
\begin{equation}
\lim_{n\rightarrow\infty}\int_{\mathbb{R}^{3}}V_{R_{n}}(x)\left|\rho_{R_{n}}(x)\right|^{2}dx=1.\label{eq:N.10}
\end{equation}

We deduce from (\ref{eq:N.9}), (\ref{eq:N.10}) and the inequalities
in (\ref{eq:N.8}) that 
\begin{equation}
\lim_{n\rightarrow\infty}\mathcal{E}_{0}\left(\rho_{R_{n}}\right)=e(0).\label{eq:N.11}
\end{equation}
Moreover, $\|\rho_{R_{n}}\|_{2}=1$ and $V_{R_{n}}(x)=0$ when $|x|\leq1$,
so 
\[
\int_{\left\{ |x|\leq1\right\} }\left|\rho_{R_{n}}(x)\right|^{2}dx=1-\int_{\left\{ \left|x\right|>1\right\} }\left|\rho_{R_{n}}(x)\right|^{2}dx\leq1-\int_{\mathbb{R}^{3}}V_{R_{n}}(x)\left|\rho_{R_{n}}(x)\right|^{2}dx.
\]
We then conclude from (\ref{eq:N.10}) that 
\begin{equation}
\lim_{n\rightarrow\infty}\int_{\left\{ |x|\leq1\right\} }\left|\rho_{R_{n}}(x)\right|^{2}dx=0.\label{eq:N.12}
\end{equation}
\textbf{\bigskip{}
}

\noindent \textbf{Step 3} (Conclusion). Recall $\rho_{R}\in H^{1}(\mathbb{R}^{3}),$
$\|\rho_{R}\|_{2}=1$ and $\mathcal{E}_{V_{R}}\left(\rho_{R}\right)=e^{\text{rad}}\left(V_{R}\right).$
Seeking a contradiction, we have shown (see (\ref{eq:N.11}) and (\ref{eq:N.12}))
that for some $R_{n}\rightarrow\infty$ as $n\rightarrow\infty$,
the sequence of radial functions $\left\{ \rho_{R_{n}}\right\} _{n=1}^{\infty}$
is vanishing on the unit ball while also minimizing for the \textit{Free
Pekar Problem} in (\ref{eq:N.2}). Moreover, we recall a result of
E.H. Lieb (Theorem 10 in {[}Lb1977{]}) that this minimization problem
in (\ref{eq:N.2}) admits a symmetric decreasing minimizer $Q\in H^{1}(\mathbb{R}^{3}),$
which is \textit{unique up to translation}. 

Since $\left\{ \rho_{R_{n}}\right\} _{n=1}^{\infty}$ is minimizing
for the problem in (\ref{eq:N.2}), by a standard argument ({[}Lb1977{]})
using Sobolev's and Young's inequalities, 
\begin{equation}
\|\rho_{R_{n}}\|_{H^{1}}<C\label{eq:N.13}
\end{equation}
for all $n.$ Then there is a subsequence, which (with an abuse of
notation) we also denote by $\left\{ \rho_{R_{n}}\right\} _{n=1}^{\infty}$,
and some $\rho\in H^{1}(\mathbb{R}^{3})$ where 
\begin{equation}
\rho_{R_{n}}\rightharpoonup\rho\ \text{in}\ H^{1}\left(\mathbb{R}^{3}\right).\label{eq:N.14}
\end{equation}

We tabulate some immediate observations about $\rho$: $\left\{ \rho_{R_{n}}\right\} _{n=1}^{\infty}$
is radial, so the weak limit $\rho$ is radial almost everywhere.
Moreover, by the weak lower semicontinuity of the $L^{2}$-norm, 
\begin{equation}
\|\rho\|_{2}\leq\liminf_{n\rightarrow\infty}\left\Vert \rho_{R_{n}}\right\Vert _{2}=1\label{eq:N.15}
\end{equation}
and 
\begin{equation}
\left\Vert \nabla\rho\right\Vert _{2}\leq\liminf_{n\rightarrow\infty}\left\Vert \nabla\rho_{R_{n}}\right\Vert _{2}.\label{eq:N.16}
\end{equation}
Finally, since the subsequence $\left\{ \rho_{R_{n}}\right\} _{n=1}^{\infty}$
vanishes on the unit ball (see (\ref{eq:N.12})), by the Rellich-Kondrashov
theorem (Theorem 8.6 in {[}LL2001{]}), 
\begin{equation}
\int_{\left\{ |x|\leq1\right\} }\left|\rho(x)\right|^{2}dx=\lim_{n\rightarrow\infty}\int_{\left\{ |x|\leq1\right\} }\left|\rho_{R_{n}}(x)\right|^{2}dx=0.\label{eq:N.17}
\end{equation}

We shall argue that this weak limit $\rho$\textendash{} an a.e. radial
function vanishing on the unit ball (see (\ref{eq:N.17}))\textendash is
in fact a minimizer for the Free Pekar Problem in (\ref{eq:N.2});
appealing to Lieb's uniqueness result, we then have a contradiction.
The main task is to show 
\begin{equation}
\int\int_{\mathbb{R}^{3}\times\mathbb{R}^{3}}\frac{\left|\rho_{R_{n}}(x)\right|^{2}\left|\rho_{R_{n}}(y)\right|^{2}}{\left|x-y\right|}dx\,dy\longrightarrow\int\int_{\mathbb{R}^{3}\times\mathbb{R}^{3}}\frac{\left|\rho(x)\right|^{2}\left|\rho(y)\right|^{2}}{\left|x-y\right|}dx\,dy.\label{eq:N.18}
\end{equation}
From the positivity of the Coulomb energy (Theorem 9.8 in {[}LL2001{]}),
\[
\left|\left(\int\int_{\mathbb{R}^{3}\times\mathbb{R}^{3}}\frac{\left|\rho_{R_{n}}(x)\right|^{2}\left|\rho_{R_{n}}(y)\right|^{2}}{\left|x-y\right|}dx\,dy\right)^{\frac{1}{2}}-\left(\int\int_{\mathbb{R}^{3}\times\mathbb{R}^{3}}\frac{\left|\rho(x)\right|^{2}\left|\rho(y)\right|^{2}}{\left|x-y\right|}dx\,dy\right)^{\frac{1}{2}}\right|
\]
\begin{equation}
\leq\left(\int\int_{\mathbb{R}^{3}\times\mathbb{R}^{3}}\frac{\left|\left(\rho_{R_{n}}-\rho\right)(x)\right|^{2}\left|\left(\rho_{R_{n}}-\rho\right)(y)\right|^{2}}{\left|x-y\right|}dx\,dy\right)^{\frac{1}{2}}\label{eq:N.19}
\end{equation}
Since $\rho_{R_{n}},\rho$ are radial (we abuse notation by writing
$\rho_{R_{n}}(r)=\rho_{R_{n}}(x)$ with $r=|x|$), by Newton's Theorem
(Theorem 9.7 in {[}LL2001{]}), 
\[
\int\int_{\mathbb{R}^{3}\times\mathbb{R}^{3}}\frac{\left|\left(\rho_{R_{n}}-\rho\right)(x)\right|^{2}\left|\left(\rho_{R_{n}}-\rho\right)(y)\right|^{2}}{\left|x-y\right|}dx\,dy
\]
\[
=\left(4\pi\right)^{2}\int_{0}^{\infty}\left|\left(\rho_{R_{n}}-\rho\right)(s)\right|^{2}\left(\left|\left(\rho_{R_{n}}-\rho\right)(r)\right|^{2}\min\left(r^{-1},s^{-1}\right)r^{2}dr\right)s^{2}ds
\]
\[
\leq\left(4\pi\right)^{2}\left(\int_{0}^{\infty}\left|\left(\rho_{R_{n}}-\rho\right)(s)\right|^{2}s^{2}ds\right)\left(\int_{0}^{\infty}\frac{\left|\left(\rho_{R_{n}}-\rho\right)(r)\right|^{2}}{r}\,r^{2}dr\right)
\]
\begin{equation}
\leq16\pi\left(\int_{0}^{\infty}\frac{\left|\left(\rho_{R_{n}}-\rho\right)(r)\right|^{2}}{r}\,r^{2}dr\right)\label{eq:N.20}
\end{equation}
From Strauss' Radial Lemma ({[}Ss1977{]}; see also (\ref{eq:R.2}))
and the bounds in (\ref{eq:N.13}), (\ref{eq:N.15}) and (\ref{eq:N.16}),
\begin{equation}
\left|\left(\rho_{R_{n}}-\rho\right)(r)\right|\leq\frac{\sqrt{2}\left|\mathbb{S}^{2}\right|^{-\frac{1}{2}}\|\left(\rho_{R_{n}}-\rho\right)(r)\|_{H^{1}}}{r}<\frac{C}{r}\ \text{when}\ r>2.\label{eq:N.21}
\end{equation}
Denoting $B_{M}(0)$ to be a ball of radius $M$ centered at the origin,
by (\ref{eq:N.21}) and Hölder's inequality, 
\[
\int_{0}^{\infty}\frac{\left|\left(\rho_{R_{n}}-\rho\right)(r)\right|^{2}}{r}\,r^{2}dr\leq\frac{\sqrt{M}}{2\sqrt{\pi}}\left\Vert \rho_{R_{n}}-\rho\right\Vert _{L^{4}\left(B_{M}(0)\right)}^{2}+\frac{\left\Vert \rho_{R_{n}}-\rho\right\Vert _{2}}{2\sqrt{\pi}}\left(\int_{M}^{\infty}\frac{\left|\left(\rho_{R_{n}}-\rho\right)(r)\right|^{2}}{r}\,r^{2}dr\right)^{\frac{1}{2}}
\]
\[
\leq\frac{\sqrt{M}}{2\sqrt{\pi}}\left\Vert \rho_{R_{n}}-\rho\right\Vert _{L^{4}\left(B_{M}(0)\right)}^{2}+\frac{C\left\Vert \rho_{R_{n}}-\rho\right\Vert _{2}}{2\sqrt{\pi}}\left(\int_{M}^{\infty}\frac{1}{r^{2}}\,dr\right)^{\frac{1}{2}}
\]
\begin{equation}
\leq\frac{\sqrt{M}}{2\sqrt{\pi}}\left\Vert \rho_{R_{n}}-\rho\right\Vert _{L^{4}\left(B_{M}(0)\right)}^{2}+\frac{C}{\sqrt{\pi M}}.\label{eq:N.22}
\end{equation}
Above, $M$ can be chosen arbitrarily large. Furthermore, by (\ref{eq:N.14})
and the Rellich-Kondrashov theorem (Theorem 8.6 in {[}LL2001{]}),
$\left\Vert \rho_{R_{n}}-\rho\right\Vert _{L^{4}\left(B_{M}(0)\right)}\rightarrow0$
for all $M$. Therefore, the desired convergence in (\ref{eq:N.18})
follows from (\ref{eq:N.19}), (\ref{eq:N.20}) and (\ref{eq:N.22}). 

Since $\left\Vert \rho\right\Vert _{_{2}}\leq1$ (see (\ref{eq:N.15}))
and $e(0)<0$ (Lemma 1(i) in {[}Lb1977{]}), 
\begin{equation}
\mathcal{E}_{0}\left(\rho\right)\geq\mathcal{E}_{0}\left(\frac{\rho}{\|\rho\|_{2}}\right)\|\rho\|_{2}^{2}\geq e(0)\left\Vert \rho\right\Vert _{2}^{2}\geq e(0).\label{eq:N.23}
\end{equation}
Also, by (\ref{eq:N.11}), (\ref{eq:N.16}) and (\ref{eq:N.18}), 

\begin{equation}
e(0)=\lim_{k\rightarrow\infty}\mathcal{E}_{0}\left(\rho_{R_{n}}\right)\geq\mathcal{E}_{0}\left(\rho\right).\label{eq:N.24}
\end{equation}
We deduce from (\ref{eq:N.23}) and (\ref{eq:N.24}) that the weak
limit $\rho$ is a minimizer for the Free Pekar Problem in (\ref{eq:N.2}):
\begin{equation}
\|\rho\|_{_{2}}=1\ \ \ \text{and}\ \ \ \mathcal{E}_{0}\left(\rho\right)=e(0).\label{eq:N.25}
\end{equation}

By Lieb's uniqueness result (Theorem 10 in {[}Lb1977{]}), 
\begin{equation}
\rho(x)=Q(x-a)\ \ \text{for some}\ \ a\in\mathbb{R}^{3},\label{eq:N.26}
\end{equation}
where $Q$ is symmetric decreasing about the origin. But $\rho$ is
a.e. radial, so $a=0$ in (\ref{eq:N.26}) necessarily. Alas, the
minimizer $\rho$ with $\|\rho\|_{2}=1$ is symmetric decreasing about
the origin, and yet $\rho(x)=0$ for a.e. $|x|\leq1$ (see (\ref{eq:N.17}));
we have a contradiction. 
\end{proof}
Theorem 1 now follows. 

\section{The Rotational Average}

As explained in the introduction, we first need to differentiate the
the map $\delta\mapsto e(V+\delta Z)$ for radial test potentials
$Z\in L^{3/2}\left(\mathbb{R}^{3}\right)+L^{\infty}(\mathbb{R}^{3})$. 
\begin{thm}
Let the potential $V\in L^{3/2}\left(\mathbb{R}^{3}\right)+L^{\infty}(\mathbb{R}^{3})$
be nonnegative, vanishing at infinity and not almost everywhere identically-zero.
For a function $W\in L^{3/2}(\mathbb{R}^{3})+L^{\infty}(\mathbb{R}^{3})$
and a real parameter $\delta,$ consider the perturbed Pekar energy
\begin{equation}
e(V+\delta W)=\inf_{\|u\|_{2}=1}\mathcal{E}_{V+\delta W}(u):=\inf_{\|u\|_{2}=1}\left\{ \mathcal{E}_{V}(u)-\delta\int_{\mathbb{R}}W(x)|u(x)|^{2}dx\right\} ,\label{eq:D1}
\end{equation}
where 
\[
\mathcal{E}_{V}(u)=\int_{\mathbb{R}^{3}}|\nabla u|^{2}dx-\int\int\frac{|u(x)|^{2}|u(y)|^{2}}{|x-y|}dx\,dy-\int_{\mathbb{R}^{3}}V(x)|u(x)|^{2}dx.
\]
If the minimization problem for the Pekar energy $e(V)=\inf\{\mathcal{E}_{V}(u):\ \|u\|_{2}=1\}$
admits a minimizer $u_{V}$ that is unique up to rotations, then for
all radial functions $Z\in L^{3/2}(\mathbb{R}^{3})+L^{\infty}(\mathbb{R}^{3})$
the map $\delta\mapsto e(V+\delta Z)$ is differentiable at $\delta=0$
and 
\begin{equation}
\left.\frac{d}{d\delta}\right|_{\delta=0}e(V+\delta Z)=-\int_{\mathbb{R}^{3}}Z(x)|u_{V}(x)|^{2}dx.\label{eq:D2}
\end{equation}
\end{thm}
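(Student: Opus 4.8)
The plan is to prove \eqref{eq:D2} by the standard Lieb--Simon convexity argument, adapted to the rotational symmetry of the problem. First I would record that $\delta\mapsto e(V+\delta Z)$ is concave, being an infimum of affine functions of $\delta$ (for each fixed $u$, the map $\delta\mapsto\mathcal{E}_V(u)-\delta\int Z|u|^2$ is affine). A concave function on $\mathbb{R}$ has one-sided derivatives everywhere, and they satisfy $\left.\tfrac{d}{d\delta}\right|_{0^+}e(V+\delta Z)\le\left.\tfrac{d}{d\delta}\right|_{0^-}e(V+\delta Z)$, so differentiability at $\delta=0$ is equivalent to the reverse inequality; both one-sided derivatives will be squeezed between the two sides of $-\int Z|u_V|^2$. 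The upper bound for the right derivative (and lower bound for the left) is immediate: testing $e(V+\delta Z)$ with the fixed minimizer $u_V$ gives $e(V+\delta Z)\le e(V)-\delta\int Z|u_V|^2$, so the difference quotient is $\le-\int Z|u_V|^2$ for $\delta>0$ and $\ge$ it for $\delta<0$.

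The substantive direction is the matching lower bound on the right derivative. Here I would pick, for a sequence $\delta_n\downarrow 0$, a minimizer (or near-minimizer) $u_n$ for $e(V+\delta_n Z)$. Because $Z\in L^{3/2}+L^\infty$ is a relatively-form-small perturbation and $e(V+\delta_n Z)\to e(V)$, a now-routine argument (Sobolev, Young, the Coulomb term's sign, and the form-boundedness of $V$ and $Z$) bounds $\{u_n\}$ in $H^1(\mathbb{R}^3)$; pass to a weakly convergent subsequence $u_n\rightharpoonup u_*$. The key point is to show $u_*$ is (up to rotation) a minimizer of $e(V)$ and that $\int Z|u_n|^2\to\int Z|u_*|^2$. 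The latter convergence and the lower semicontinuity needed to identify $u_*$ as a minimizer both require controlling the potential and Coulomb terms along the sequence; since $V$ merely vanishes at infinity (not necessarily compactly supported or radial), one cannot simply invoke local compactness, so I would want to argue that no mass escapes to infinity---for this, compare $e(V+\delta_n Z)=\mathcal{E}_{V+\delta_n Z}(u_n)$ against the energy of a translated/rescaled competitor to rule out vanishing and splitting, i.e. a concentration-compactness-type dichotomy argument showing the minimizing sequence is tight modulo the (trivial, since $V\ne 0$) possibility of mass at infinity. Granting tightness, $|u_n|^2\to|u_*|^2$ strongly enough (in $L^1\cap L^3$, say, locally and with tightness at infinity) that $\int V|u_n|^2\to\int V|u_*|^2$, $\int Z|u_n|^2\to\int Z|u_*|^2$, and the Coulomb term converges; combined with $\|u_*\|_2=1$ and weak lower semicontinuity of $\|\nabla\cdot\|_2$ this yields $\mathcal{E}_V(u_*)\le\lim\mathcal{E}_{V+\delta_n Z}(u_n)=e(V)$, so $u_*$ is a minimizer of $e(V)$.

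Now the rotational-uniqueness hypothesis enters decisively, and this is where the radiality of $Z$ is used. By hypothesis $u_*=R\cdot u_V$ for some $R\in SO(3)$ (i.e. $|u_*(x)|^2=|u_V(R^{-1}x)|^2$), hence
\[
\int_{\mathbb{R}^3}Z(x)\,|u_*(x)|^2\,dx=\int_{\mathbb{R}^3}Z(Rx)\,|u_V(x)|^2\,dx=\int_{\mathbb{R}^3}Z(x)\,|u_V(x)|^2\,dx,
\]
the last equality because $Z$ is radial and $Z\circ R=Z$. Therefore $\int Z|u_n|^2\to\int Z|u_V|^2$ \emph{regardless of which rotated minimizer the sequence selects}. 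Passing to the limit in the difference quotient $\bigl(e(V+\delta_n Z)-e(V)\bigr)/\delta_n\ge-\int Z|u_n|^2$ (which holds since $u_n$ is admissible for $e(V)$, giving $e(V)\le\mathcal{E}_V(u_n)=\mathcal{E}_{V+\delta_n Z}(u_n)+\delta_n\int Z|u_n|^2=e(V+\delta_nZ)+\delta_n\int Z|u_n|^2$) yields $\left.\tfrac{d}{d\delta}\right|_{0^+}e(V+\delta Z)\ge-\int Z|u_V|^2$. Together with the upper bound from the previous paragraph, the right derivative equals $-\int Z|u_V|^2$; an identical argument with $\delta_n\uparrow 0$ gives the same value for the left derivative, so $\delta\mapsto e(V+\delta Z)$ is differentiable at $0$ with the stated derivative. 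The main obstacle is the compactness step---showing the minimizing sequences for $e(V+\delta_n Z)$ do not lose mass at infinity for a general $V$ that only vanishes at infinity; this is precisely the concentration-compactness input behind Lemma~1, and I expect it to be handled by the same Lions-type strict-subadditivity inequality $e(V)<e(0)$ (valid since $V\ge 0$, $V\not\equiv 0$) that rules out escape of mass, reused now uniformly for all small $\delta$.
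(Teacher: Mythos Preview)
Your proposal is correct and follows essentially the same route as the paper: two-sided variational bounds on the difference quotient using the minimizer $u_V$ on one side and approximate minimizers $u_{\delta_n}$ on the other, then Lions' concentration-compactness to show the $u_{\delta_n}$ (which form a minimizing sequence for $e(V)$) converge in $H^1$ along sub-subsequences to some rotated copy $u_V(\mathcal{R}\,\cdot)$, and finally the radiality of $Z$ to make $\int Z|u_{\delta_n}|^2$ converge to $\int Z|u_V|^2$ independently of which rotation is selected. The paper simply invokes Lions' result directly for the strong $H^1$ compactness (rather than spelling out the weak-limit-plus-tightness argument you sketch), and it works with $\delta^2$-approximate minimizers throughout rather than assuming the perturbed problem has a minimizer, but these are presentational differences only.
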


\begin{proof}
For $W\in L^{3/2}(\mathbb{R}^{3})+L^{\infty}(\mathbb{R}^{3})$, by
a standard argument ({[}Lb1977{]}, {[}LL2001{]}) using Sobolev's and
Young's inequalities, there exist constants $0<c_{1}<1$ and $c_{2}>0$
such that for all $u\in H^{1}(\mathbb{R}^{3})$ with $\|u\|_{2}=1$
and $|\delta|$ sufficiently small, 
\begin{equation}
\mathcal{E}_{V+\delta W}(u)\geq c_{1}\|\nabla u\|_{2}^{2}-c_{2}.\label{eq:D3}
\end{equation}
Therefore, 
\begin{equation}
e(V+\delta W)>-\infty.\label{eq:D4}
\end{equation}

We deduce from (\ref{eq:D4}) that for $W\in L^{3/2}(\mathbb{R}^{3})+L^{\infty}(\mathbb{R}^{3})$
(and $|\delta|$ sufficiently small), the perturbed problem in (\ref{eq:D1})
admits an approximate minimizer $u_{\delta}\in H^{1}(\mathbb{R}^{3})$
with $\|u_{\delta}\|_{2}=1$ satisfying 
\begin{equation}
\mathcal{E}_{V+\delta W}(u_{\delta})\leq e(V+\delta W)+\delta^{2}.\label{eq:D5}
\end{equation}

We denote the set of minimizers for the Pekar energy as $\mathcal{M}:=\{u\in H^{1}(\mathbb{R}^{3}):\ \|u\|_{2}=1\ \text{and}\ \mathcal{E}_{V}(u)=e(V)\}.$
For any $\tilde{u}\in\mathcal{M}$, by the variational principle,
\begin{equation}
e(V+\delta W)\leq\mathcal{E}_{V+\delta W}(\tilde{u})=e(V)-\delta\int_{\mathbb{R}^{3}}W(x)|\tilde{u}(x)|^{2}dx.\label{eq:D5.1}
\end{equation}
Likewise, for an approximate minimizer $u_{\delta},$ $\|u_{\delta}\|_{2}=1$
satisfying (\ref{eq:D5}), 
\[
e(V)\leq\mathcal{E}_{V}(u_{\delta})=\mathcal{E}_{V+\delta W}(u_{\delta})+\delta\int_{\mathbb{R}^{3}}W(x)|u_{\delta}(x)|^{2}dx
\]
\begin{equation}
\leq e(V+\delta W)+\delta^{2}+\delta\int_{\mathbb{R}^{3}}W(x)|u_{\delta}(x)|^{2}dx.\label{eq:D5.2}
\end{equation}

Let $\delta>0$. For a perturbation $W\in L^{3/2}(\mathbb{R}^{3})+L^{\infty}(\mathbb{R}^{3})$
and an approximate minimizer $u_{\delta},$ $\|u_{\delta}\|_{2}=1$
in (\ref{eq:D5}), by the inequalities in (\ref{eq:D5.1}) and (\ref{eq:D5.2}),
\begin{equation}
-\int_{\mathbb{R}^{3}}W(x)|u_{\delta}(x)|^{2}dx-\delta\leq\frac{e(V+\delta W)-e(V)}{\delta}\leq-\left(\sup_{u\in\mathcal{M}}\int_{\mathbb{R}^{3}}W(x)|u(x)|^{2}dx\right).\label{eq:D6}
\end{equation}
When $\delta<0$, the inequalities in (\ref{eq:D6}) are merely reversed:
\begin{equation}
-\int_{\mathbb{R}^{3}}W(x)|u_{\delta}(x)|^{2}dx-\delta\geq\frac{e(V+\delta W)-e(V)}{\delta}\geq-\left(\inf_{u\in\mathcal{M}}\int_{\mathbb{R}^{3}}W(x)|u(x)|^{2}dx\right).\label{eq:D7}
\end{equation}

By our uniqueness assumption, $\mathcal{M}=\{u_{V}(\mathcal{R}x):\ \mathcal{R}\in SO(3)\}$.
Furthermore, with radial functions $Z\in L^{3/2}(\mathbb{R}^{3})+L^{\infty}(\mathbb{R}^{3}),$
by a change of variable, 
\begin{equation}
\int_{\mathbb{R}^{3}}Z(x)|u_{V}(\mathcal{R}x)|^{2}dx=\int_{\mathbb{R}^{3}}Z(x)|u_{V}(x)|^{2}dx\label{eq:D8}
\end{equation}
for all $\mathcal{R}\in SO(3)$. Then, for radial perturbations, the
rightmost quantities in the inequalities (\ref{eq:D6}) and (\ref{eq:D7})
are equal. Hence (with $Z$ radial) the claimed differentiability
of the map $\delta\mapsto e(V+\delta W)$ at $\delta=0$ will follow
from our observation in (\ref{eq:D8}) and the inequalities in (\ref{eq:D6})
and (\ref{eq:D7}) if we can prove the convergence result stated below: 

For radial $Z\in L^{3/2}(\mathbb{R}^{3})+L^{\infty}(\mathbb{R}^{3})$,
let $u_{\delta}$ with $\|u_{\delta}\|_{2}=1$ be an approximate minimizer
as defined in (\ref{eq:D5}) above for the perturbed energy $e(V+\delta Z).$
Then, for any sequence $\{\delta_{n}\}_{n=1}^{\infty}$ where $|\delta_{n}|>0$
and $\delta_{n}\rightarrow0$ as $n\rightarrow\infty$, the corresponding
sequence of approximate minimizer $\{u_{\delta_{n}}\}_{n=1}^{\infty}$
satisfies 
\begin{equation}
\lim_{n\rightarrow\infty}\int_{\mathbb{R}^{3}}Z(x)|u_{\delta_{n}}(x)|^{2}dx=\int_{\mathbb{R}^{3}}Z(x)|u_{V}(x)|^{2}dx.\label{eq:D9}
\end{equation}
We observe that $\{u_{\delta_{n}}\}_{n=1}^{\infty}$ is minimizing
for the problem $e(V)=\inf\{\mathcal{E}_{V}(u):\ \|u\|_{2}=1\}$.
Then, by Lions' concentration compactness argument, every subsequence
$\{u_{\delta_{n_{k}}}\}$ has a sub-subsequence $\{u_{\delta_{n_{k_{l}}}}\}$
converging strongly in $H^{1}(\mathbb{R}^{3})$ to some function in
$\mathcal{M}=\{u_{V}(\mathcal{R}x):\ \mathcal{R\in S}O(3)\}$. We
deduce from our observation in (\ref{eq:D8}) that 
\[
\lim_{l\rightarrow\infty}\int_{\mathbb{R}^{3}}Z(x)|u_{\delta_{n_{k_{l}}}}(x)|^{2}dx=\int_{\mathbb{R}^{3}}Z(x)|u_{V}(\mathcal{R}x)|^{2}dx
\]
\[
=\int Z(x)|u_{V}(x)|^{2}dx
\]
\end{proof}
We are now ready to prove Theorem 2. 
\begin{proof}[Proof of Theorem 2.]
 For any $W\in L^{3/2}(\mathbb{R}^{3})+L^{\infty}(\mathbb{R}^{3})$
we denote its rotational average $\left\langle W\right\rangle =\int_{SO(3)}W(\mathcal{R}x)\,d\gamma(\mathcal{R})$.
Note that $\left\langle W\right\rangle \in L^{3/2}(\mathbb{R}^{3})+L^{\infty}(\mathbb{R}^{3})$.
As explained at the end of the introduction, using the variational
principle and (\ref{eq:1-6}), we arrive at the relations 
\[
\frac{e(V_{R}+\delta\left\langle W\right\rangle )-e(V_{R})}{\delta}\leq\liminf_{\alpha\rightarrow\infty}-\frac{1}{\alpha^{3}}\int_{\mathbb{R}^{3}}\|\Psi_{\alpha}^{V_{R}}\|_{\mathcal{F}}^{2}\left(\frac{x}{\alpha}\right)\,\left\langle W\right\rangle (x)\,dx,
\]
and 
\[
\frac{e(V_{R}+\delta\left\langle W\right\rangle )-e(V_{R})}{\delta}\geq\limsup_{\alpha\rightarrow\infty}-\frac{1}{\alpha^{3}}\int_{\mathbb{R}^{3}}\|\Psi_{\alpha}^{V_{R}}\|_{\mathcal{F}}^{2}\left(\frac{x}{\alpha}\right)\,\left\langle W\right\rangle (x)\,dx.
\]

Using Fubini's theorem, a simple change of variable and that the electron
density $\|\Psi_{\alpha}^{V}\|_{\mathcal{F}}^{2}\left(\frac{x}{\alpha}\right)$
is radial, we observe 
\begin{equation}
\int_{\mathbb{R}^{3}}\|\Psi_{\alpha}^{V_{R}}\|_{\mathcal{F}}^{2}\left(\frac{x}{\alpha}\right)\,\left\langle W\right\rangle (x)\,dx=\int_{\mathbb{R}^{3}}\|\Psi_{\alpha}^{V_{R}}\|_{\mathcal{F}}^{2}\left(\frac{x}{\alpha}\right)W(x)\,dx.\label{12}
\end{equation}
Furthermore, since $\left\langle W\right\rangle $ is radial and we
assume that the problem in (\ref{eq:1-4}) admits a minimizer $u_{V_{R}}$
that is unique up to rotations, we conclude from Theorem 6 and (\ref{12}):
\[
\lim_{\alpha\rightarrow\infty}-\frac{1}{\alpha^{3}}\int_{\mathbb{R}^{3}}\|\Psi_{\alpha}^{V}\|_{\mathcal{F}}^{2}\left(\frac{x}{\alpha}\right)\,W(x)\,dx=\lim_{\alpha\rightarrow\infty}-\frac{1}{\alpha^{3}}\int_{\mathbb{R}^{3}}\|\Psi_{\alpha}^{V}\|_{\mathcal{F}}^{2}\left(\frac{x}{\alpha}\right)\,\left\langle W\right\rangle (x)\,dx\qquad\qquad\qquad\qquad\qquad\qquad\qquad\qquad\qquad\qquad\qquad\qquad\qquad\qquad
\]
\[
\quad=\left.\frac{d}{d\delta}\right|_{\delta=0}e(V+\delta\left\langle W\right\rangle )
\]
\[
\quad\quad\quad=-\int_{\mathbb{R}^{3}}|u_{V_{R}}(x)|^{2}\left\langle W\right\rangle (x)\,dx
\]
\[
\quad\quad\quad\quad\quad\quad\quad\quad\quad\quad=-\int_{\mathbb{R}^{3}}\left(\int_{SO(3)}|u_{V_{R}}(\mathcal{R}x)|^{2}d\gamma(\mathcal{R})\right)W(x)\,dx.
\]
\end{proof}

\section*{Acknowledgement}

I thank Michael Loss for helpful discussions.

\appendix

\section{Existence and Uniqueness of a Ground State}

We describe the main ideas in {[}GLL2001{]} and {[}Ha2000{]} for arguing
the existence of a unique ground-state wave function: 
\begin{prop}
Fix $\alpha>0$. If the Schrödinger operator $\mathbf{p}^{2}-\alpha^{2}V(\alpha x)$
has a negative energy bound state in $L^{2}\left(\mathbb{R}^{3}\right)$,
i.e., there is an eigenfunction $\zeta\in L^{2}\left(\mathbb{R}^{3}\right)$
and $\eta>0$ so that 
\[
\left(\mathbf{p}^{2}-\alpha^{2}V(\alpha x)\right)\zeta(x)=-\eta\zeta(x),
\]
then there exists a normalized function $\Psi_{\alpha}^{V}$ in $L^{2}\left(\mathbb{R}^{3}\right)\otimes\mathcal{F}$
satisfying 
\[
H_{\alpha}^{V}\Psi_{\alpha}^{V}=E^{V}\left(\alpha\right)\Psi_{\alpha}^{V}.
\]
\end{prop}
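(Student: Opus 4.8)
The plan is to follow the two-part scheme of {[}GLL2001{]} and {[}Ha2000{]}: reduce the existence of a ground state to a strict \emph{binding inequality}, and then derive existence from it by a compactness argument. Put $\Sigma(\alpha):=\inf\sigma(H_{\alpha}^{0})$, the ground-state energy of the translation-invariant Hamiltonian ($V\equiv0$); it is the threshold at which the electron detaches from the potential, and since $V\ge0$ one always has $E^{V}(\alpha)\le\Sigma(\alpha)$. I claim it suffices to prove the strict inequality
\[
E^{V}(\alpha)<\Sigma(\alpha);
\]
the resulting gap will then make a minimizing sequence for $E^{V}(\alpha)$ precompact, with limit a genuine ground state.

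\textbf{The binding inequality.} I would obtain this from the Lee--Low--Pines transform. With $P_{f}:=\int_{\mathbb{R}^{3}}k\,a_{k}^{\dagger}a_{k}\,dk$ and the unitary $U:=e^{-i\,x\cdot P_{f}}$ one has $U^{-1}a_{k}U=e^{-ik\cdot x}a_{k}$, whence
\[
U^{-1}H_{\alpha}^{V}U=(\mathbf{p}-P_{f})^{2}+\int_{\mathbb{R}^{3}}a_{k}^{\dagger}a_{k}\,dk-\frac{\sqrt{\alpha}}{(2\pi)^{3/2}}\int_{\mathbb{R}^{3}}\bigl(a_{k}+a_{k}^{\dagger}\bigr)\frac{dk}{|k|}-\alpha^{2}V(\alpha x),
\]
in which $\mathbf{p}=-i\nabla_{x}$ now commutes with everything except $V(\alpha x)$. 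Dropping $V$, the right-hand side is the direct integral $\int_{\mathbb{R}^{3}}^{\oplus}K(P)\,dP$ with fibers $K(P)=(P-P_{f})^{2}+\int a_{k}^{\dagger}a_{k}\,dk-\frac{\sqrt{\alpha}}{(2\pi)^{3/2}}\int(a_{k}+a_{k}^{\dagger})\frac{dk}{|k|}$ on $\mathcal{F}$, and $\Sigma(\alpha)=\inf_{P}\inf\sigma(K(P))=\inf\sigma(K(0))$ (the polaron energy band is minimized at $P=0$; see {[}Sp2004{]}). Crucially, $K(0)$ \emph{does} admit a ground state $\Theta_{0}\in\mathcal{F}$, $\|\Theta_{0}\|=1$, $K(0)\Theta_{0}=\Sigma(\alpha)\Theta_{0}$: its essential spectrum begins at $\Sigma(\alpha)+1$ — emitting a phonon costs a full unit of energy — hence strictly above $\Sigma(\alpha)$; and $\Theta_{0}$ may be taken a parity eigenstate, so $\langle\Theta_{0},P_{f}\Theta_{0}\rangle=0$. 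Choosing the Schr\"odinger bound state $\zeta$ real and normalized, so that $\|\nabla\zeta\|_{2}^{2}-\alpha^{2}\int V(\alpha x)|\zeta|^{2}\,dx=-\eta$, I would use the trial vector $U(\zeta\otimes\Theta_{0})$: since $\langle\zeta,\mathbf{p}\,\zeta\rangle=0=\langle\Theta_{0},P_{f}\Theta_{0}\rangle$ the cross term in $(\mathbf{p}-P_{f})^{2}$ vanishes and
\[
\bigl\langle U(\zeta\otimes\Theta_{0}),\,H_{\alpha}^{V}\,U(\zeta\otimes\Theta_{0})\bigr\rangle=\|\nabla\zeta\|_{2}^{2}+\bigl\langle\Theta_{0},K(0)\Theta_{0}\bigr\rangle-\alpha^{2}\!\int V(\alpha x)|\zeta|^{2}\,dx=\Sigma(\alpha)-\eta,
\]
so that $E^{V}(\alpha)\le\Sigma(\alpha)-\eta<\Sigma(\alpha)$ by the variational principle.

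\textbf{Existence from the gap.} Granted $\Sigma(\alpha)-E^{V}(\alpha)>0$, I would run the compactness argument of {[}GLL2001{]} and {[}Ha2000{]}. For a normalized minimizing sequence $\Psi_{n}$ for $E^{V}(\alpha)$, two uniform a priori bounds make it precompact: (i) a phonon-number bound, from the pull-through relation $a_{k}\Psi_{n}\approx\frac{\sqrt{\alpha}}{(2\pi)^{3/2}|k|}\bigl(H_{\alpha}^{V}+1-E^{V}(\alpha)\bigr)^{-1}e^{-ik\cdot x}\Psi_{n}$ together with the resolvent estimate $\bigl\|(H_{\alpha}^{V}+1-E^{V}(\alpha))^{-1}e^{-ik\cdot x}\bigr\|\lesssim(1+|k|^{2})^{-1}$ supplied by the kinetic term, so that $\int|k|^{-2}(1+|k|^{2})^{-2}\,dk<\infty$ tames the ultraviolet; and (ii) exponential localization of the electron, $\|e^{\beta|x|}\Psi_{n}\|\le C$ for $\beta^{2}<\Sigma(\alpha)-E^{V}(\alpha)$, from an IMS/Agmon estimate using that $H_{\alpha}^{V}\ge\Sigma(\alpha)-\varepsilon$ for $|x|$ large. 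Bound (ii) prevents loss of mass to spatial infinity and (i) prevents loss into free phonons (each of which costs energy $1$), so $\Psi_{n}\rightharpoonup\Psi_{\alpha}^{V}$ with $\|\Psi_{\alpha}^{V}\|=1$; weak lower semicontinuity of the quadratic form then forces $\langle\Psi_{\alpha}^{V},H_{\alpha}^{V}\Psi_{\alpha}^{V}\rangle=E^{V}(\alpha)$, and hence $H_{\alpha}^{V}\Psi_{\alpha}^{V}=E^{V}(\alpha)\Psi_{\alpha}^{V}$.

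\textbf{Main obstacle.} The real work is the second step — the uniform a priori bounds and the verification that no mass leaks out of the minimizing sequence — which is the technical heart of {[}GLL2001{]} and {[}Ha2000{]} and which one transcribes essentially verbatim; because the Fr\"ohlich phonons are dispersionless there is no soft-phonon (infrared) difficulty, so it is lighter than the quantum-electrodynamics case. The binding inequality would otherwise be the conceptual sticking point — the translation-invariant comparison model has no ground state — but it is disarmed by the Lee--Low--Pines reduction precisely because that same energy gap of $1$ equips the fiber $K(0)$ with an honest ground state to build the trial vector from.
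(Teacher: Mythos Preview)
Your proposal is correct and follows exactly the two-step scheme the paper sketches: establish the binding inequality $E^{V}(\alpha)<E^{V\equiv0}(\alpha)$, then invoke the compactness machinery of {[}GLL2001{]}/{[}Ha2000{]}. The paper gives no details beyond citing Theorem~3.1 of {[}GLL2001{]} for the binding step and noting that Rellich--Kondrashov plus the gap yields existence, so your write-up is in fact more explicit than the paper's own.

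One point worth flagging as a genuine (if minor) difference in route: your Lee--Low--Pines construction of the trial vector $U(\zeta\otimes\Theta_{0})$ relies on the fiber Hamiltonian $K(0)$ having an honest ground state $\Theta_{0}$, which is available here precisely because the Fr\"ohlich phonons carry a unit mass gap. The literal argument of {[}GLL2001{]} Theorem~3.1 is written for massless photons, where no such fiber ground state exists, and must instead pass through an infrared regularization before removing it. Your shortcut is the natural simplification when the field is massive---it is what ``adapting the arguments in {[}GLL2001{]}'' to the Fr\"ohlich model actually amounts to---and buys you a one-line energy computation in place of a limiting procedure. The compactness half of your argument (pull-through number bound, Agmon localization, no soft-boson leakage) is the standard transcription and matches what the paper intends.
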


\noindent The existence of a negative energy bound state of the operator
$\mathbf{p}^{2}-\alpha^{2}V(\alpha x)$ can be used to show that the
Fröhlich Hamiltonian $H_{\alpha}^{V}$ satisfies the binding condition
(cf. Theorem 3.1 in {[}GLL2001{]})
\begin{equation}
E^{V}\left(\alpha\right)<E^{V\equiv0}\left(\alpha\right).\label{eq:I.3}
\end{equation}
With the Rellich-Kondrashov theorem and the binding inequality in
(\ref{eq:I.3}), the above proposition can be established along the
lines of the argument provided in {[}GLL2001{]}. In order to see that
the ground state is unique, we use the well-known Schrödinger representation
of the phonon Fock space $\mathcal{F}$, which is naturally identified
with the $L^{2}$ space over a probability measure space $\left(\mathcal{Q},\mu\right)$
(see p. 185 in {[}Sp2004{]}). We denote the unitary operator 
\begin{equation}
\vartheta:\ L^{2}\left(\mathbb{R}^{3}\right)\otimes\mathcal{F}\mapsto L^{2}\left(\mathbb{R}^{3}\otimes\mathcal{Q},\ dx\times d\mu\right).\label{eq:I.4}
\end{equation}
The identification in (\ref{eq:I.4}) of $\mathcal{F}$ with an $L^{2}$
space opens up the possibility of establishing the uniqueness of the
ground state via the classical route of positiviity improvement: on
a $\sigma-$finite measure space $\left(\chi,\nu\right)$, a bounded
operator $B$ on $L^{2}\left(\chi,\nu\right)$ is positivity improving
if $\langle f_{1},B\,f_{2}\rangle_{L^{2}\left(\chi,\nu\right)}>0$
for all positive $f_{1}$ and $f_{2}$ in $L^{2}\left(\chi,\nu\right)$
(and a function $f\in L^{2}\left(\chi,\nu\right)$ is positive if
$f\geq0$ a.e. and $f\neq0$ a.e.). Armed with the Schrödinger representation
and the notion of positivity improvement defined above, uniqueness
can be shown along the lines of Hiroshima's argument in {[}Ha2000{]}:
\begin{prop}
Fix $\alpha>0$, and let the external potential $V\in L^{3/2}\left(\mathbb{R}^{3}\right)+L^{\infty}\left(\mathbb{R}^{3}\right)$
satisfy the conditions of Proposition 1 above. Writing $V=V_{+}-V_{-}$,
suppose $\alpha^{2}V_{+}(\alpha x)$ is relatively form bounded with
respect to the operator $\mathbf{p}^{2}$ with form bound strictly
less than one; that is, for some $0<a<1$ there exists $c_{a}>0$
such that for all $\xi\in H^{1}\left(\mathbb{R}^{3}\right)$, 
\begin{equation}
\alpha^{2}\int_{\mathbb{R}^{3}}V_{+}\left(\alpha x\right)\left|\xi(x)\right|^{2}dx\leq a\|\nabla\xi\|_{2}^{2}+c_{a}\|\xi\|_{2}^{2}.\label{eq:I.5}
\end{equation}
Then the ground-state wave function $\Psi_{\alpha}^{V}$ of the Fröhlich
Hamiltonian $H_{\alpha}^{V}$ is unique. 
\end{prop}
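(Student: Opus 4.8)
The plan is to prove uniqueness by the classical Perron--Frobenius argument, carried out in the Schr\"odinger representation of the Fock space $\mathcal{F}$. By the preceding proposition $H_{\alpha}^{V}$ has a ground state $\Psi_{\alpha}^{V}$, so $E^{V}(\alpha)=\inf\sigma(H_{\alpha}^{V})$ is an eigenvalue; it therefore suffices to show that the semigroup $e^{-tH_{\alpha}^{V}}$, transported by the unitary $\vartheta$ of (\ref{eq:I.4}) to an operator on $L^{2}(\mathbb{R}^{3}\times\mathcal{Q},\,dx\times d\mu)$, is ergodic --- indeed positivity improving --- for every $t>0$: a positivity-preserving self-adjoint semigroup that is ergodic and whose generator attains its spectral infimum as an eigenvalue has that eigenvalue simple, which is precisely the uniqueness of $\Psi_{\alpha}^{V}$ up to a phase.

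The first step is to set up the representation (\ref{eq:I.4}) with the real structure on the one-particle space $\mathfrak{h}=L^{2}(\mathbb{R}^{3}_{k})$ chosen to suit the electron--phonon coupling. Writing $g_{x}(k)=e^{-ik\cdot x}/|k|$ for the Fr\"ohlich form factor, one checks that $\mathrm{Im}\,\langle g_{x},g_{x'}\rangle_{\mathfrak{h}}=0$ for all $x,x'\in\mathbb{R}^{3}$, because $\langle g_{x},g_{x'}\rangle$ is the value at $x-x'$ of the (real, radial) Fourier transform of $|k|^{-2}$; hence the closed real-linear span of $\{g_{x}:x\in\mathbb{R}^{3}\}$ lies in a real form $\mathfrak{h}_{\mathbb{R}}$ of $\mathfrak{h}$, and in the representation built on $\mathfrak{h}_{\mathbb{R}}$ every field operator occurring in the interaction is multiplication by a real-valued function on $\mathcal{Q}$. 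Under $\vartheta$ the Hamiltonian then takes the Schr\"odinger form $H_{\alpha}^{V}=\mathbf{p}^{2}+N+U$ on $L^{2}(\mathbb{R}^{3}\times\mathcal{Q})$, with $N$ the phonon number operator acting on the $\mathcal{Q}$-variable and $U(x,q)=-\alpha^{2}V(\alpha x)-c\,\Phi_{x}(q)$ multiplication by an almost-everywhere finite real function, $\Phi_{x}$ being the Gaussian field variable indexed by $g_{x}$ and $c>0$ an explicit constant. The form bound (\ref{eq:I.5}) on the external potential, together with the standard estimate that the Fr\"ohlich interaction is infinitesimally form bounded with respect to $\mathbf{p}^{2}+N$, makes $H_{\alpha}^{V}$ self-adjoint and bounded below by the KLMN theorem, so $e^{-tH_{\alpha}^{V}}$ is well defined.

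The second step is positivity. That $e^{-tH_{\alpha}^{V}}$ is positivity preserving follows from the Trotter--Kato product formula, since each of the factors $e^{-s\mathbf{p}^{2}}$ (Gaussian convolution in $x$), $e^{-sN}$ (the Mehler/Ornstein--Uhlenbeck semigroup on $L^{2}(\mathcal{Q},\mu)$) and $e^{-sU}$ (multiplication by a positive function) is positivity preserving, hence so are their products and strong limits. To upgrade this to positivity improving I would use the Feynman--Kac--Nelson representation, in which $\langle F,e^{-tH_{\alpha}^{V}}G\rangle$ equals the expectation of $\overline{F(\omega(0))}\,G(\omega(t))\exp(-\int_{0}^{t}U(\omega(s))\,ds)$ over the product of a Brownian path in $x$ and an Ornstein--Uhlenbeck path in $\mathcal{Q}$; the exponential weight is almost surely a strictly positive, finite number because $U$ is almost surely integrable along these paths (again a consequence of the form bounds), and since both path measures have full support, $\langle F,e^{-tH_{\alpha}^{V}}G\rangle>0$ whenever $F,G\geq0$ are not almost everywhere zero. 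This is precisely ergodicity, and Perron--Frobenius then yields that $E^{V}(\alpha)$ is a simple eigenvalue.

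The step I expect to be the main obstacle is this last upgrade. Positivity preservation passes trivially to strong limits of cut-off Hamiltonians, but \emph{strict} positivity does not; and the interaction is both sign-indefinite --- so $e^{-sU}$ is an unbounded multiplication operator --- and mildly ultraviolet-singular --- the form factor $g_{x}$ fails to lie in $L^{2}(\mathbb{R}^{3}_{k})$, so the real structure and the field operators must first be constructed with an ultraviolet cutoff. The clean route is therefore to establish the positivity-improving property for Hamiltonians with a bounded interaction, for which the perturbed semigroup dominates a positive multiple of the positivity-improving free semigroup $e^{-t(\mathbf{p}^{2}+N)}=e^{-t\mathbf{p}^{2}}\otimes e^{-tN}$ (strictly positive Gaussian kernel in $x$, strictly positive Mehler kernel in $\mathcal{Q}$), and then to remove the cutoff, using dominated convergence --- itself justified by the form bounds --- to keep $\langle F,e^{-tH_{\alpha}^{V}}G\rangle$ strictly positive for nonzero $F,G\geq0$ in the limit. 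This is Hiroshima's argument in [Ha2000], and it is somewhat simpler here than in the Pauli--Fierz model treated there because the Fr\"ohlich interaction is a pure multiplication operator in the Schr\"odinger representation rather than a momentum-dependent coupling.
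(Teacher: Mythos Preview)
Your proposal is correct and follows essentially the same route as the paper: pass to the Schr\"odinger representation of Fock space via the unitary $\vartheta$, show that $\vartheta e^{-tH_{\alpha}^{V}}\vartheta^{-1}$ is positivity improving by the functional-integral (Feynman--Kac--Nelson) formula for the heat kernel, and conclude uniqueness by Perron--Frobenius. The paper itself gives only this sketch, referring to Hiroshima [Ha2000] for the positivity-improving step and to Spohn [Sp2004] for the Perron--Frobenius conclusion; you have simply fleshed out the same architecture (real structure on $\mathfrak{h}$, Trotter for positivity preservation, cutoff-and-limit for strict positivity), and your remark that the Fr\"ohlich coupling becomes a pure multiplication operator---rather than the momentum-coupled Pauli--Fierz interaction treated in [Ha2000]---is the right observation for why the argument transfers without new difficulties.
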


\noindent Let $\vartheta$ be the unitary operator as given in (\ref{eq:I.4}).
When the external potential $V\in L^{3/2}\left(\mathbb{R}^{3}\right)+L^{\infty}\left(\mathbb{R}^{3}\right)$
satisfies the condition in (\ref{eq:I.5}), it is possible to show
using the functional integral formula for the heat kernel that the
operator $\vartheta e^{-tH_{\alpha}^{V}}\vartheta^{-1}$, $t>0$ is
positivity improving {[}Ha2000{]}. It then follows that the ground
state of $\vartheta H_{\alpha}^{V}\vartheta^{-1}$ is unique (see
p.191 in {[}Sp2004{]}). Since $\vartheta$ is unitary, the ground
state of $H_{\alpha}^{V}$ is therefore also unique.

\end{document}